%% file: paper.tex
\documentclass[letterpaper,twocolumn,10pt]{article}

\PassOptionsToPackage{table,dvipsnames}{xcolor}

\usepackage{usenix-2020-09}

\usepackage[english]{babel}
\usepackage{amsmath}
\usepackage{amssymb}
\usepackage{amsfonts}
\usepackage{amsthm}
\usepackage{graphicx}
\usepackage[capitalize,noabbrev]{cleveref}
\usepackage{xspace}
\usepackage{ulem}
\usepackage{booktabs}
\usepackage{tabularx}
\usepackage{algorithm}
\usepackage{algpseudocode}
\usepackage{bm}
\usepackage{caption}
\usepackage{enumitem}
\usepackage{subcaption}
\usepackage{comment}   

\usepackage[compact]{titlesec}
\titlespacing*{\section}       {0pt}{1.2ex plus .3ex minus .2ex}{0.6ex plus .1ex}
\titlespacing*{\subsection}    {0pt}{1.0ex plus .2ex minus .2ex}{0.4ex plus .1ex}
\titlespacing*{\subsubsection} {0pt}{0.8ex plus .2ex minus .2ex}{0.3ex plus .1ex}

\setlist{topsep=2pt,itemsep=1pt,parsep=1pt}

\renewcommand{\abstractname}{\normalsize Abstract}


\let\citet\cite

\newtheorem{theorem}{Theorem}

\theoremstyle{definition}

\begin{document}
\title{\Large \bf GATE: GPU-Accelerated Traffic Engineering for the WAN}

%
\author{
{\rm Rahul Bothra$^{1, 2}$ \quad Alexander Krentsel$^{1, 3}$ \quad Saptarshi Mandal$^2$ \quad Brighten Godfrey$^2$} \\
{\rm Sylvia Ratnasamy$^{1, 3}$ \quad Rob Shakir$^1$ \quad R. Srikant$^2$}\\
\\
$^1$Google \quad $^2$UIUC \quad $^3$UC Berkeley
}
\date{}

\newcounter{rescount}

\newcommand{\result}[2][]{%
  \refstepcounter{rescount}%
  \noindent\textit{Result \#\therescount: #2\\}
  \if\relax\detokenize{#1}\relax\else\label{#1}\fi%
}


\renewcommand{\algorithmiccomment}[1]{\hfill\footnotesize\textcolor{gray}{#1}}


\newcommand{\cut}{\sout}
\newcommand{\topic}[1]{\textcolor{black}{\bf #1}}
\newcommand{\red}[1]{\textcolor{red}{#1}}
\newcommand{\fullsysname}{GPU-Accelerated Traffic Engineering\xspace}
\newcommand{\sys}{GATE\xspace}
\newcommand{\wansmall}{WAN A\xspace}
\newcommand{\wanlarge}{WAN B\xspace}
\newcommand{\sysname}{\sys}
\newcommand{\obj}[1]{\textbf{(O.#1)}\xspace}
\newcommand{\bg}[1]{\textcolor{blue}{[Brighten: #1]}}
\newcommand{\bgnew}[1]{\textcolor{blue}{#1}}
\newcommand{\ak}[1]{\textcolor{brown}{[Alex: #1]}}
\newcommand{\aknew}[1]{\textcolor{black}{#1}}
\newcommand{\sr}[1]{\textcolor{purple}{[Sylvia: #1]}}
\newcommand{\rb}[1]{\textcolor{purple}{[Rahul: #1]}}
\newcommand{\rbnew}[1]{\textcolor{orange}{#1}}
\newcommand{\geant}{G\'EANT\xspace}

\renewcommand{\rb}[1]{\textcolor{orange}{}}
\renewcommand{\bg}[1]{\textcolor{blue}{}}
\renewcommand{\ak}[1]{\textcolor{blue}{}}
\renewcommand{\rbnew}[1]{\textcolor{black}{#1}}
\renewcommand{\bgnew}[1]{\textcolor{black}{#1}}

\crefformat{section}{\S#2#1#3}
\crefformat{subsection}{\S#2#1#3}
\crefname{equation}{Eq.}{Eqs.}
\Crefname{equation}{Eq.}{Eqs.}
\crefname{theorem}{Thm.}{Thms.}
\Crefname{theorem}{Thm.}{Thms.}
\crefname{figure}{Fig.}{Figs.}
\Crefname{figure}{Fig.}{Figs.}

\Crefformat{section}{\S\S#2#1#3} 
\Crefformat{subsection}{\S\S#2#1#3} 

\renewcommand{\topic}[1]{#1}

\maketitle

\input{0-abstract}

\input{1-introduction}
\input{2-motivation}
\input{3-design}
\input{4-evaluation}
\input{5-related-work}
\input{6-conclusion}
\bibliographystyle{abbrv}
\bibliography{reference}
\input{7-appendix}

\end{document}

%% file: 0-abstract.tex
\begin{abstract}
Traffic engineering (TE) has become a crucial tool for enforcing routing policy and maintaining operational efficiency \aknew{in large networks.}
Existing TE solutions pick an objective function to optimize, aiming to balance (i) allocating traffic optimally with (ii) reacting quickly to demand changes and disruption events. However, as the scale of networks grows, the runtime of the existing optimal solution becomes infeasibly large. The alternative -- approximate solvers -- result in costly inefficiencies.

We present \fullsysname (\sys), which achieves the best of both worlds: enabling fast TE runtimes through a highly-parallelizable GPU-compatible decomposition, while iteratively converging to the provably optimal solution. \sys unlocks a unique set of desirable properties: it becomes increasingly parallelizable with network size, supports a wide spectrum of fairness objectives, and offers theoretically guaranteed convergence to the optimal solution and near-optimal convergence within a bounded time.
We evaluate \sys on production traces from two large cloud WANs, and show that \sys achieves near-optimal solutions 5-10x faster than state-of-the-art.
\end{abstract}

%% file: 1-introduction.tex
\section{Introduction}
\label{sec:introduction}

Traffic engineering (TE) is central to maximizing utilization and meeting performance goals in wide-area network (WAN) deployments. Large-scale networks are expensive to build and maintain, requiring a large physical footprint and expensive hardware. While traditional shortest-path routing such as that which is enabled by IS-IS and BGP can route traffic according to a predefined set of metrics, they do not consider other criteria (e.g., capacity) which can leave parts of the network highly underutilized, translating to wasted costs for network providers. Thus, large-scale cloud providers and network operators deploy TE systems~\cite{ebb, onewan, b4, swan} to have more control over their traffic placement.


\topic{The goal of a traffic engineering system is to (1) place traffic optimally with respect to the operator's goals, and (2) converge to a solution quickly.} Optimality is important because it directly captures either network operations costs (utilization) or customer experience (e.g. throughput, latency, availability). Fast runtime is important because the TE system must react quickly to changing network conditions, such as link cuts or traffic spikes, to bring the network back into an optimal state. While the optimization objective can vary, in practice for many operators~\cite{b4, ebb, danna2017upwardmaxminfair} the objective is \textit{max-min fairness}~\cite{bertsekas2009maxminfair, nace2009maxminfairoverview}, which balances maximizing utilization with maintaining fairness in resource allocation across users and flows.

\topic{Solutions to date have involved a tradeoff between these two goals of runtime and optimality.} This is because direct \aknew{optimal} formulations of the max-min fairness optimization goal (using either general-purpose LP solvers or special-purpose algorithms~\cite{Danna}) are computationally expensive, leading to runtimes of up to 5 minutes on the WANs we study in our evaluation. As a result, most approaches involve formulating the traffic engineering optimization problem as an approximation, sacrificing some optimality in exchange for faster runtimes. Approximate TE solvers have been deployed in production from the earliest SDN WAN deployments, such as the Waterfill algorithm reported in Google's B4 paper~\cite{b4} or the LP approximation in Microsoft's SWAN~\cite{swan}. Recent works on ML-based approximate
solvers~\cite{teal, perry2023dote} show promise in decreasing runtimes even further, but don't support the practical objectives we focus on \aknew{(discussed in \cref{sec:motivation:te-objectives})}, so they won't be a focus of this paper. 
\bgnew{Another line of work has attempted to parallelize TE solving~\cite{ncflow,pop,teal,DeDe}, but also generally compromises optimality guarantees or does not support max-min fairness due to the difficulty of splitting TE into cleanly separable subproblems (discussed in more detail in \S\ref{sec:motivation-parallelization}).}
The degree of acceptable runtime vs. optimality is determined by operator preference based on their workload needs, customer sensitivity, and budgets; thus, each operator tunes their deployment and method to their needs.


As networks are being rapidly scaled to meet AI demands, navigating this tradeoff is becoming increasingly problematic. First, larger networks lead to increasingly longer algorithm runtime, which stress existing formulations that scale superlinearly with input size. Second, larger networks lead to more frequent changes from failures, shifting demand, etc., which requires TE to run more frequently to react.


We propose a new TE approach that enables both fast runtimes and convergence to optimal allocations. Our approach is based on a novel application of Lagrange Decomposition (LD) techniques, which allow us to break the global TE problem into a collection of smaller per-flow problems that exchange information and make simple updates, iterating until convergence. This process is guaranteed to converge to the optimal allocation, or can be run for a shorter time until it has converged sufficiently closely to optimal. We go beyond a straightforward application of LD to introduce the notion of \emph{adaptive learning rates} that we show are critical to achieving fast convergence to optimal allocation. Furthermore, we formulate the decomposition as very simple closed-form subproblems. 

\aknew{These design decisions give \sysname} three important properties that enable efficient implementations. (1) The closed-form simplicity of our decomposition formulation is amenable to a CUDA-kernel implementation over GPUs, enabling massive parallelism. (2) The dynamics of our overall allocation support differential computing, i.e. when processing new information, we can start with the previous allocation to converge much more quickly to the new optimal. 
Finally, (3) our formulation of the decomposition supports practical WAN TE constraints \bgnew{(per-flow demands and pathing)} and a spectrum of objectives, including proportional \bgnew{fairness} and the commonly used objective -- max-min fairness. To our knowledge, \textbf{GATE is the first GPU-accelerated WAN TE solution \bgnew{supporting such practical constraints} with an optimality guarantee.} \bg{Please see what you think of that version. I did insert a few more words but I feel it reads OK -- not too many qualifiers -- and this may better protect us from readers feeling like we are over-claiming.}


\topic{Our evaluation shows that \sysname achieves a better runtime-optimality tradeoff than existing state of the art (SOTA) TE algorithms.} Compared to SOTA TE for optimality (Danna~\cite{Danna}), \sys improves runtime by \textasciitilde100x while achieving within $5$\% of the optimal allocation. Compared to the commonly used TE for runtime (k-Waterfill~\cite{k-waterfill}), \sys achieves a $4$\% improvement in optimality within the same runtime. \aknew{Based on both industry reports and our discussions with cloud providers, at today’s scale, single-digit improvements in efficiency percentages can translate to significant (>O(\$100M)) savings in annual network provisioning costs~\cite{delloro2026capex}.} To capture the cumulative improvement, we introduce \emph{drift-adjusted optimality (DAO)} as a single metric that captures the joint effect of runtime and optimality, which stems from lower runtimes resulting in more up-to-date routes. \sys achieves $4$--$6$\% better DAO than existing TE algorithms (Soroush, SWAN \cite{soroush, swan}) on current topologies. Our results also highlight the importance of our component design contributions to making LD techniques practical for TE: adaptive learning enables a 2-3x speedup in runtime compared to a straightforward application of LD techniques~\cite{DeDe}, our GPU implementation enables a further 15-20x speedup, and \aknew{warm-start} enables an additional 4-6x speedup.




%% file: 2-motivation.tex
\section{Background and Motivation}
\label{sec:motivation}

Over the past few decades, traditional ISPs and modern hyperscalers alike have adopted SDN-based control architectures to manage their WANs~\cite{swan,ebb,b4,road-to-sdn}. These SDN systems are responsible for collecting state about the network such as topology and demand and programming routers accordingly. At the heart of such systems is an SDN controller that is responsible for deciding how traffic should be placed, given a view of the topology and demand matrix, by running a TE algorithm which produces path allocations for all flows. These inputs to the TE algorithm are aggregated from low-level telemetry such as interface statuses and sending rates which are continually streamed from the routers or hosts in the network. TE is invoked as often as every few minutes, with each invocation taking the latest available information as input. 

\aknew{The end-to-end convergence time of SDN systems is made up of (1) propagation time, during which the controller learns of state changes, (2) computation time, during which the TE algorithm computes new paths, and (3) programming time, during which these paths are programmed into the network. Historically, the computation time and programming time have dominated end-to-end convergence. Recent advances in dataplane techniques such as source routing have been shown to significantly decrease programming time~\cite{dsdn} in a way that stays constant with network size, however computation time remains significant, especially in larger networks.}


\subsection{Traffic Engineering Objectives}
\label{sec:motivation:te-objectives}

\topic{Network operators balance a variety of goals in choosing and configuring their traffic engineering deployments.} These goals are driven by customer needs: providing \obj{1} as much bandwidth as requested, \obj{2} with minimal latency paths, \obj{3} at a low cost, and \obj{4} with minimal disruptions. Finally, operators aim to do this while \obj{5} maintaining fairness across their customers. 

\topic{Meeting these goals requires operators to consider both the static and dynamic states of the network.} For a given static snapshot of the network and demands, operators aim to place as much traffic as possible, using shortest paths for higher priority traffic first, then placing lower-priority traffic on longer paths with more bandwidth. Maximizing network utilization minimizes network ownership costs. To avoid unfairly starving any users of bandwidth, this traffic placement must happen with fairness in mind. The most common TE optimization formulation to meet these goals is \textit{max-min fairness}~\cite{nace2009maxminfairoverview}, which aims to give all flows equal bandwidth on their best available paths, while spreading remaining bandwidth to those who need more. 


\topic{In reality, networks are highly dynamic systems, facing constant changes in both traffic demand as application needs change, and network topology as links fail.} Prior work reports average demand swings of $35\%$ per $5$ minutes~\cite{ncflow, teal}. We observed O(100s) of link capacity changes each day in the large cloud WAN we analyzed -- which can cause traffic placed on paths traversing affected links to be dropped due to congestion -- in addition to link failures. Network operators deploy a variety of techniques to lessen the impact, such as establishing Fast-Reroute Bypass paths to route traffic around downed links; these backup paths place traffic in a manner that avoids dead-ends, but often diverges from the calculated traffic engineering solution’s objectives~\cite{dsdn}, necessitating recalculation to return to meeting operator goals. Even in regular times of operation without faults, demand naturally shifts over time causing traffic placement to drift further and further from optimal.

\topic{For this reason, operators continually re-run TE to keep routing close to optimal.} Yet most existing solvers restart from scratch each invocation, with limited warm-start benefit, so faster runtimes translate directly into fresher allocations.

\subsection{Runtime-Optimality Tradeoff}
\label{sec:motivation:runtime-optimality-tradeoff}

\topic{Exact formulations of the max-min fairness optimization goal commonly selected by network operators are computationally expensive~\cite{Danna}, with runtimes of $>$5 minutes on standard sized deployments.} This is far beyond an acceptable time for operators, for whom it is important that the network can react and repair itself after a link or router failure quickly enough to mitigate user impact. Long runtimes lead to a larger deviation between the TE solver's solution and the new network state. As a result, practitioners typically use an approximation of the exact solution, sacrificing some optimality in exchange for faster runtime. 

\topic{Accelerating growth in the size and footprints of global-scale WANs has made revisiting the TE problem worthwhile and important.} Large-scale cloud WANs report network size growth of $4$-$7\times$ over the past five years~\cite{azure-network-size-blog, meta-network-size, google-network-size}. This growth is only expected to accelerate in projections of the next five years, driven by the needs of large-scale data processing and machine learning jobs. These larger networks increase infrastructure cost proportionally, making traffic placement efficiency increasingly valuable. At the same time, runtimes using existing TE approaches grow exponentially longer with network size (shown in \cref{s:evaluation:scaling-behavior}), stressing acceptable network reaction times.

\topic{Network growth has made selecting a good operating point balancing runtime and optimality increasingly difficult.} Existing TE algorithms each provide their own tradeoff point. For instance, k-Waterfill~\cite{k-waterfill} solves quickly (\textasciitilde2 sec) but gets a very suboptimal solution ($92$\% of optimal). On the other hand, the exact solver computes the optimal solution, but takes a long time (\textasciitilde$350\text{ sec} \gg 2$). The deviation within the traffic matrix in that time is (\textasciitilde$20$\%), making the solution suboptimal. Neither is desirable to the operator.

\topic{To quantify the joint impact of runtime and optimality of a TE solution, we introduce a new metric, Drift Adjusted Optimality (DAO), that combines the \emph{optimality} penalty of the solution with the staleness penalty caused by longer \emph{runtime}}. The fundamental challenge of existing approximations is that they struggle from one of two challenges that lead to poor DAO: (i) slow reaction to bursts / link-cuts, or (ii) suboptimal solution with low runtimes. As network sizes grow, the DAO of existing solutions deteriorates as runtimes worsen substantially (as existing TE algorithms' runtime grows super-linearly with network size~\cite{pop, k-waterfill}) even if optimality remains the same. Thus, we set our goal to design a traffic engineering algorithm that keeps DAO high as network size grows.




\topic{A secondary goal of our design is \textit{flexibility} in runtime-optimality tradeoff points}. The goal of TE is to reach optimal routing quickly, but ``optimal'' is defined by the operator's business needs and constraints, both of which can change due to shifting business priorities or environmental changes. Whereas existing solutions require configuring a single point in design space trading off optimality and runtime ahead of deployment, we aim to offer operators flexibility through dynamically configuring a single TE solution at runtime to their preferred tradeoff point. 

\subsection{Related Work}
\label{sec:motivation-existing}

Traffic engineering is a multicommodity flow problem, which (for linear objective functions or max-min fairness) can be formulated as a linear program (LP) optimization problem and solved with general-purpose solvers like Gurobi, or with special-purpose algorithms~\cite{Danna}. To improve runtime, early hyperscale WANs therefore developed approximate methods~\cite{b4,swan}. Since then, there have been several major lines of work to improve the runtime-optimality tradeoff. We discuss two lines of work in depth -- solving the max-min fairness objective faster (\S\ref{sec:motivation-existing-maxmin}), and parallelizing solvers regardless of the objective (\S\ref{sec:motivation-parallelization}) -- which are relevant to our goals and methods and which we'll later compare with experimentally.

\subsubsection{Accelerating Max-Min Fairness}
\label{sec:motivation-existing-maxmin}

Compared to objectives like maximizing total throughput or minimizing max link utilization, \emph{max-min fairness} is more difficult because it involves solving multiple LP instances: first maximizing the allocation of the minimum flow, then maximizing the minimum of the remaining flows, and so on. Soroush~\cite{soroush} tackled this by using geometric binning of rates which then allows it to reformulate the problem as a single-shot optimization. This introduces an approximation, but one of the methods of~\cite{soroush} provides a configurable bound on suboptimality (dependent on the bin separation factor) with runtime increasing as the approximation improves.

\subsubsection{Parallelization Without Guarantees}

\label{sec:motivation-parallelization}
With problem sizes scaling, one would of course like to be able to parallelize the solver. This is tempting because large networks have many nodes, links, and flows which appear not to directly affect each other -- yet it is subtly difficult because in the global solution, they affect each other indirectly through overlapping shared resources.

Several recent works have tried to make progress towards parallelization, with different methods of splitting up the problem into smaller sub-problems. NCFlow~\cite{ncflow} splits a network into clusters (ideally capturing natural clustering in the topology), aggregating the problem at cluster granularity, and then solving parallelizable sub-problems within each cluster. However, this is in general not optimal for the global TE problem. POP~\cite{pop} partitions the problem differently, by splitting the set of user demands and slicing resources, so that each sub-problem has a subset of users and a fraction of network resources. This is a good fit for ``granular''~\cite{pop} problems, with many individually-tiny demands which can flexibly be shifted to use different resources. But that property is only partially true of TE, and after merging the sub-problems the solution is again not optimal in general.

Teal~\cite{teal} parallelizes across flow demands, applying a neural network model to route each individual demand. The use of ML makes it a natural fit to solve the sub-problems in parallel on a GPU. More relevantly for our work, after the RL stage, Teal uses a second stage optimization to improve the solution. This stage uses a Lagrangian decomposition technique, specifically the Alternating Direction Method of Multipliers (ADMM)~\cite{neal2011distributed} which it can parallelize on a GPU. While ADMM in general can provide convergence guarantees, to make this particular TE formulation work, Teal avoids TE's non-negativity constraints, instead projecting the solution back into the feasible space. This results in approximations in each iteration of ADMM, which adds up over time, and the solution can (and does) converge to a suboptimal solution, without a bound on the optimality gap. Teal runs ADMM a fixed number of iterations to heuristically improve. \cite{teal} reports that this ADMM formulation did not converge fast enough to run alone, without the RL stage.

All the above approaches run into the difficulty that TE is not easily partitioned, and resort to heuristics which, for one reason or another, are not guaranteed to be optimal.

\subsubsection{Parallelization With Guarantees}

\bgnew{To the best of our knowledge, interestingly, all proposed parallel TE solvers with optimality guarantees leverage Lagrangean decomposition.}

\bgnew{Decomposition has a long history in network optimization.} Early works designed or analyzed congestion control algorithms \cite{kelly1998rate,low1999optimization,srikant2004mathematics,srikant2014communication}, where a user adapts its rate according to network feedback about Lagrange multipliers (acting as congestion signals). \bgnew{These are conceptually related to TE in that they are distributed solutions to optimally route traffic \cite{he2007towards} but do not explicitly solve WAN TE.} 


Closer to our problem, \cite{gpgpu-routing-globecom,WANG20181} solve related problems of path-selection and routing cost minimization. While these solutions also leverage \emph{decomposition to enable GPU parallelization}, the TE problem poses a very different set of constraints -- (i) a complex objective function and (ii) runtime-optimality tradeoff, that makes the formulation complex.

\bgnew{The closest works to ours are \cite{lin2006utility,zhang2025solving,DeDe} and we discuss each of these in more detail.}
\cite{lin2006utility} provides a formulation for multi-path TE using \textit{primal-dual descent -- a decomposition technique that is an alternative to ADMM.} However, this approach (i) requires careful manual tuning to make progress~\cite{neal2011distributed}, making it infeasible for large networks, and (ii) does not support max-min fairness. PDMCF\cite{zhang2025solving} also uses the primal-dual technique to implement a GPU-parallel multi-commodity flow solution. The formulation solves a simpler problem than \textit{WAN TE}, since it does not concern itself with a fixed pathset, a constraint that is key to practical WAN TE \bgnew{(e.g., to control latency). Instead, it allows any flow to traverse any link. This formulation enables PDMCF to achieve an $O(N)$ speedup over traditional solvers.  While \cite{lin2006utility,zhang2025solving} do not address practical constraints, we briefly report and compare their performance with \sysname in \cref{s:evaluation}.}




Recently, DeDe~\cite{DeDe} comes even closer to the goal \bgnew{of parallelizing WAN TE with practical constraints}. It used an ADMM decomposition \rbnew{for WAN TE}, but solved the individual flow sub-problems with calls to a general-purpose LP solver (Gurobi), allowing it to preserve the problem's non-negativity constraints. As a result, for convex objective functions (such as maximizing total flow or minimizing max link utilization), DeDe converges to the optimal allocation.\footnote{While~\cite{DeDe} does not include a formal proof, its optimality should be inherited from its use of ADMM.} However, DeDe has two limitations that are relevant to our goals. First, it does not support solving for the max-min fairness objective commonly used in deployed TE systems. Second, while use of a general-purpose solver as a subroutine allows DeDe to flexibly apply to many optimization problems even outside of TE, it comes at a relatively high efficiency cost, due to: (a) the overhead of hundreds of calls to Gurobi, (b) the need to run general-purpose solvers like Gurobi on CPU instead of GPU, and (c) the many iterations for ADMM to converge.

While these works have pushed the boundaries, they imply two open questions: Is it possible to parallelize TE in a principled way that is guaranteed to converge to the optimal, with practical constraints and objectives (e.g., path-based, max-min fairness)? Furthermore, can this be done with a direct formulation that does not resort to expensive calls to a general-purpose optimization package? As we will see, we can answer these questions in the affirmative. As in the recent work above, Lagrangian decomposition forms our principled foundation. But to answer the open questions, we need to introduce a novel carefully-designed closed-form formulation of TE, combined with an effective and theoretically-sound solution to ADMM's slow convergence problem, and method of handling the complex nonlinear nature of max-min fairness -- allowing us to leverage GPUs and continual re-optimization for huge performance gains.

%% file: 3-design.tex
\section{Design}\label{s:design}

\begin{table}[h]
    \centering
    
    \rowcolors{1}{gray!30}{gray!10} 
    
    \begin{tabular}{lr}
        \toprule
        \hline
        \textbf{Term} & \textbf{Interpretation} \\\hline 
        $E$ & Set of all links in the network. \\\hline
        $N$ & Set of all nodes in the network. \\\hline
        $P$ & Set of all paths in-use in the network. \\\hline
        & source $s$ and destination $t$.\\\hline
        $P_{st}$ & Set of paths in-use from node $s$ to $t$.\\\hline
        $C_{e}$  & Capacity of link $e$. \\\hline
        $D_{st}$ & Demand requested from node $s$ to $t$. \\\hline
        $x_{r}$ & Rate allocated over path $r$. \\\hline
        $S_{st}$ & Total rate allocated from node $s$ to $t$ (over $P_{st}$).\\\hline\bottomrule
    \end{tabular}
    \caption{Notation for a TE formulation.}
    \label{t:notation-te}
\end{table}

\begin{table}[t]
    \centering
    \renewcommand{\arraystretch}{1.1}
    
    \rowcolors{1}{gray!30}{gray!10} 
    
    \begin{tabular}{lr}
        \toprule
        \hline
        \textbf{Term} & \textbf{Interpretation} \\ \hline
        $y_{e, r}$ & Suggested value of $x_{r}$ by edge $e$. \\ \hline
        $dual_{dem_{st}},dual_{cap_{e}},$ & Duals for primal constraints,\\
        $dual_{neg_{r}}, dual_{con_{e,r}}$ & (\cref{eq:demand-constraint-primal,eq:capacity-constraint-reformulated,eq:non-negative-primal,eq:consensus-reformulated-true})\\\hline
        $\beta$ & Learning rate. \\ \hline
        $k$ & Iteration index. \\\hline\bottomrule

    \end{tabular}
    \caption{Additional terms for \sysname's ADMM re-formulation.}
    \label{t:notation-gate}
\end{table}

We aim to design an algorithm to meet the operator needs discussed in \cref{sec:motivation}; namely, providing both fast reaction time and a principled \bgnew{optimality guarantee}. This motivates an algorithm that makes useful progress in a few iterations,
can be warm-started from the previous allocation, and provably converges to the true optimum if allowed to run. Before presenting our algorithm design to meet these goals, we first formally state the TE problem.

\subsection{TE Formulation}\label{s:design:problem}

WANs typically perform updates in two phases. First, there is a path computation phase which computes a set of allowed paths (e.g., k-shortest paths) for each node-pair. Then, these allowed paths are installed into the routers' forwarding tables. The TE algorithm's job is then to find the optimal rates over each path.  As with prior TE solutions~\cite{swan, DeDe, soroush}, we assume that the set of paths to be used are precomputed and provided as input ($P$).

\paragraph{\textbf{Variables}}
We summarize the notations used in the TE formulation in Table~\ref{t:notation-te}. A node-pair $(s,t)$ (which can be interpreted as a ``user'') requests a certain demand $d_{st}$ from node $s$ to node $t$. TE algorithms expect three inputs: (i) link capacities, (ii) requested demands, and (iii) a set $P$ of pre-computed allowed paths. The TE algorithm will outputs the rate ($x_{r}$) for all $r \in P$.

\paragraph{\textbf{Constraints}}
We constrain rates to (i) not exceed requested demands (Eq. \ref{eq:demand-constraint-primal}), (ii) not exceed link capacities (Eq. \ref{eq:capacity-constraint-primal}) and (iii) non-negative values. (Eq. \ref{eq:non-negative-primal}):
\begin{equation}
    \sum_{r}^{P_{st}} x_{r} \le D_{st} \qquad \forall s \in N, t \in N
    \label{eq:demand-constraint-primal}
\end{equation}
\vspace{-2em}
\begin{equation}
    \sum_{r:e \in r}^{P} x_{r} \le C_{e} \qquad \forall e \in E
    \label{eq:capacity-constraint-primal}
\end{equation}
\vspace{-1em}
\begin{equation}
    x_{r} \ge 0 \qquad \forall r \in P
    \label{eq:non-negative-primal}
\end{equation}

\paragraph{\textbf{Objective}}
TE objectives can vary by operator, with max-min fairness being the most commonly deployed in production networks~\cite{swan, b4}, and recent \bgnew{research proposals} for maximum multicommodity flow~\cite{ncflow, pop, DeDe, teal}, \bgnew{which we also refer to as simply ``max-flow'' for short here.}  \sysname's goal is to be generalizable and effective for a spectrum of objectives. In that spirit, we use the standard notion of $\alpha$-fair utility to compute the utility for a node-pair ($U_{\alpha}$), and aggregate the utilities across node-pair to compute the objective ($U^{*}$), \ak{the context feels redundant (identical to both intro and S2), and feels odd to say "everyone uses max-min" -> "we try to support a wide range of others"} \rb{Important to leave some text here, but can be trimmed.}

\begin{equation}
\begin{aligned}
    U_\alpha(b) = \dfrac{(b)^{(1-\alpha)} - 1}{1-\alpha}\\\\
    U^{*}(x_*) = \sum_{s,t}^{N, N} U_{\alpha}\left(\sum_{r}^{P_{st}}x_{r}\right)
\end{aligned}
    \label{eq:objective-primal}
\end{equation}

This captures a spectrum of objectives with varying $\alpha$. Table~\ref{t:alpha-objectives} lists the objectives of particular interest. 

\begin{center}
    \vspace{1em} 
    \centering
    \renewcommand{\arraystretch}{1.3}
    \rowcolors{1}{gray!30}{gray!10} 
    
    \begin{tabular}{lr}
        \textbf{$\bm{\alpha}$} & \textbf{Objective} \\ 
        $\alpha = 0$ & Max-flow \\
        $\alpha \rightarrow 1$ & Proportional fairness ($log$) \\
        $\alpha \rightarrow \infty$ & Max-min fairness \\ 
    \end{tabular}
    
    \captionof{table}{Objectives captured by $\alpha$-utility.}
    \label{t:alpha-objectives}
    \vspace{1em} 
\end{center}

\subsection{Lagrangian Adaptation}\label{s:design:decomposition}
\sysname aims to break the global TE problem into smaller, \textit{local} sub-problems, which still converge to the \textit{global} optimum when solved iteratively. \sysname achieves this by applying Lagrangian decomposition techniques. To do so, we first transform \cref{eq:capacity-constraint-primal} to introduce auxiliary variables,

\begin{equation}
    \sum_{r:l \in r}^{P} y_{e, r} \le C_{e} \qquad \forall e \in E
    \label{eq:capacity-constraint-reformulated}
\end{equation}
\vspace{-0.5em}
\begin{equation}
    x_{r} = y_{e, r} \qquad \forall e\in r, r \in P
    \label{eq:consensus-reformulated-true}
\end{equation}

Note that this transformation solves the same TE problem, but separates constraints such that there is exactly one constraint per variable. This transformation enables each variable to iteratively solve its own local constraint, while exchanging information with other variables across iterations. Next, we compute the standard Augmented Lagrangian~\cite{auglag} for this transformed formulation, by introducing dual variables $dual_*$ and a penalty constant $\beta$.

\begin{equation}
\begin{aligned}
    & \mathcal{L}_\beta(x_*,y_*;dual_{*}) = - \underbrace{U^{*}(x_*)}_{\text{primal objective}} \\
    &\quad + \frac{\beta}{2}\sum_{(s,t)} \Big(\underbrace{dual_{dem_{st}} + (\textstyle\sum_{r\in P_{st}} x_r - D_{st})}_{\text{demand constraint residual}}\Big)^2\\
    &\quad + \frac{\beta}{2}\sum_{e\in E} \Big(\underbrace{dual_{cap_e} + \textstyle(\sum_{r:\,e\in r}y_{e,r} - C_e)}_{\text{capacity constraint residual}}\Big)^2\\
    &\quad - \frac{\beta}{2}\sum_{r\in P} \Big(\underbrace{dual_{neg_r} - x_r}_{\text{non-negativity constraint residual}}\Big)^2\\
    &\quad + \frac{\beta}{2}\sum_{e\in E}\sum_{r:\,e\in r} (\underbrace{dual_{con_{e,r}} + (x_r - y_{e,r})}_{\text{consensus constraint residual}})^2
\end{aligned}
\label{eq:augmented-lagrangian}
\end{equation}

Note that the Augmented Lagrangian folds the objective and constraints from \cref{s:design:problem} into a single expression, by representing the constraint violations as \textit{residuals} and penalizing them. It intuitively follows that the optimal $x_*$ is the same for \cref{eq:augmented-lagrangian} and \cref{eq:objective-primal}.


\subsection{Decomposition Iterates \bg{Not sure that is the most descriptive section title}}\label{s:design:iterative}
\begin{figure*}[t] 
    \centering
    \includegraphics[width=0.70\textwidth]{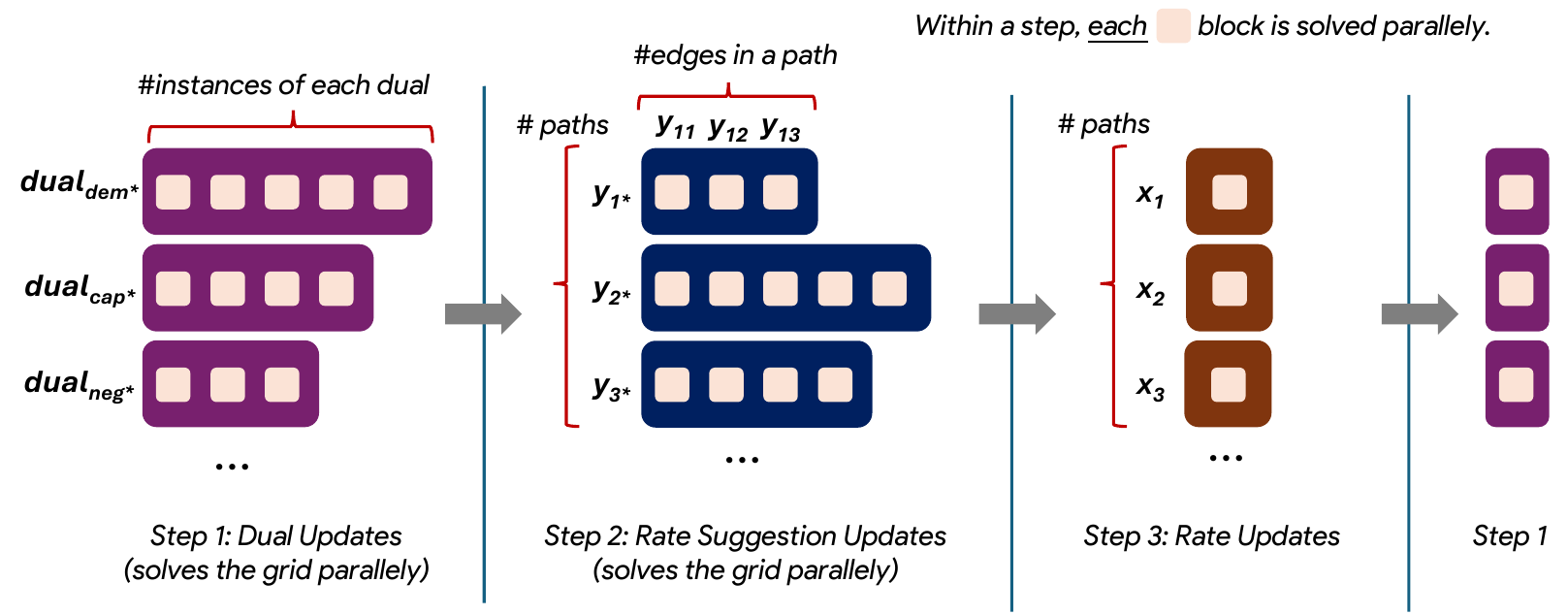}
    \caption{Steps in one iteration of \sysname. All \textit{blocks} within a step are computed parallelly. Each steps is computed sequentially. }
    \label{fig:iteration}
\end{figure*}
Next, we use ADMM decomposition~\cite{neal2011distributed} to transform the Augmented Lagrangian into an iterative formulation. The goal of each iteration is to compute values of all variables ($dual_*, x_*, y_*$) such that the Lagrangian is maximized -- computed by setting $\frac{\delta \mathcal{L}}{\delta var} = 0$. Within each iteration, a series of sequential \textit{steps} compute these values, with information from previous iteration. Done iteratively, this eventually converges to the global optimal. Each step involves solving a large number of small subproblems, all of which are solved in parallel, giving \sysname massive parallelization benefits (illustrated in \cref{fig:iteration}). The steps within one iteration are: (i) Dual updates (Ln 3), (ii) Rate suggestion updates (Ln 3), (iii) Rate updates (Ln 4), and (iv) convergence optimizations (Ln 5-15). Note that the Lagrangian is improved in two ways - (i) improving the primal objective and, (ii) reducing the constraint violation penalties, both of which are desirable in computing the optimal TE allocation.



\paragraph{\textbf{Initialization}}
We initialize $x_{*}$ with an optional input. This enables \sysname to \textit{warm-start} making progress with a near-optimal allocation, if available. Otherwise, we default start by uniformly spread the rate across available paths $x_r = \frac{D_{st}}{|P_{st}|}$. Note that this may violate some primal constraints, but constraint violations get resolved over iterations. We initialize suggested rates $y_{e, r} = x_{r}$, dual variables $dual_* = 0$ and slack variables $slack_{*} = 0^+$.

\begin{algorithm}[t]
\caption{\sysname's \textit{iterative, optimal} algorithm with $\alpha$-continuation.}
\label{alg:gate}
\small
\begin{algorithmic}[1]
\Statex\textbf{Input:} Link capacities ${C_e}$, demands ${D_{st}}$, pre-computed paths $\{P_{st}\}$, target fairness $\alpha_{target}$ (set to $\infty$ for max-min fair), and convergence threshold $\gamma$.
\Statex\textbf{Optional input:} Initial rates $x^0$, and learning rate $\beta$.
\Statex\textbf{Output:} Final allocation rates $x$.
\vspace{1.5mm}
\State \textbf{Initialization:}
\Statex \hspace{1.25em} $x^0 = \frac{D_{st}}{|P_{st}|}$, unless input is provided.
\Statex \hspace{1.25em} $y^0 = x^0$
\Statex \hspace{1.25em} $slack_{*} \leftarrow 0^+$ and $dual_{*} \leftarrow 0$.
\Statex \hspace{1.25em} $\alpha \gets 0$
\Statex \hspace{1.25em} $k \gets 0$
\vspace{2mm}
\While{$\alpha \le \alpha_{target}$} 
        \vspace{1mm}
        \State \textbf{Auxilliary updates:}
        \Statex \hspace{3em} $dual_*^{k+1} \gets \cref{eq:dual-updates-dem,eq:dual-updates-cap,eq:dual-updates-con,eq:dual-updates-neg}$ \Comment{dual updates}
        \vspace{1mm}
        \Statex \hspace{3em} $y_{e, r}^{k+1} \gets \cref{eq:rate-suggestion}$ \Comment{rate suggestion updates}
        \vspace{1mm}

        
        \State \textbf{Rate updates:}
        \Statex \hspace{3em} $S^{*}_{st} \gets \cref{eq:sum-computation}$ \Comment{Sum computation}
        \Statex \hspace{3em} $x_r^{k+1} \gets \cref{eq:rate-computation}$ 

        \vspace{1.5mm}
        \State \textbf{Convergence Optimizations:}
        \Statex \hspace{3em} $\beta^{k+1} \gets \cref{eq:adaptive-learning}$ \Comment{Learning rate updates}
        \If{residuals $(\mathcal{L}^{k + 1} - \mathcal{L}^k) \le \gamma$} 
            \If{$k_\alpha == k_{\alpha - 1}$}
                \State \textbf{break} \Comment{converged for $\alpha_{target}$}
            \Else 
            \State $\alpha \gets \alpha + 1$
            \Comment{converged for current $\alpha$}
            \EndIf
        \EndIf
        \State $k \gets k + 1$
    
\EndWhile
\vspace{1.5mm}
\State \textbf{Feasibility Projection}
\Statex \hspace{1.25em} $x \gets \text{Projection}(x^{k+1}, \alpha)$ \Comment{See Algorithm \ref{alg:projection}}
\State \Return $x$.
\end{algorithmic}
\end{algorithm}

\paragraph{\textbf{Dual and Slack Updates}}
The dual variables represent the constraint violations of the TE problem. They act as signals for the subsequent steps to update rates such that constraint violations are reduced. \cref{eq:dual-updates-cap,eq:dual-updates-dem,eq:dual-updates-con,eq:dual-updates-neg} have $|D|$, $|E|$, $|P|*|E|$, and $|P|$ instances, all computed parallelly. Adapting TE to the ADMM formulation requires additional \textit{slack} variables, described in \cref{app:iterate-remainder:slack}.
\begin{equation}
\label{eq:dual-updates-dem}
dual_{dem_{st}}^{k+1}= \left[dual_{dem_{st}}^{k}+\Big(\sum_{r\in P_{st}}x_{r}^{k} - D_{st}\Big) \right]_+
\end{equation}
\begin{equation}
\label{eq:dual-updates-cap}
dual_{cap_{e}}^{k+1}= \left[dual_{cap_{e}}^{k}+\Big(\sum_{r:e\in r}^{P}y_{e,r}^{k}-C_{e}\Big)\right]_+
\end{equation}
\vspace{-1em}
\begin{equation}
\label{eq:dual-updates-con}
dual_{con_{e,r}}^{k+1}=\left[dual_{con_{e,r}}^{k}+\big(x_{r}^{k}-y_{e,r}^{k}\big)\right]_+
\end{equation}
\begin{equation}
\label{eq:dual-updates-neg}
dual_{neg_{r}}^{k+1}= \left[dual_{neg_{e}}^{k}-x_{r}^{k}\right]_+
\end{equation}




\paragraph{\textbf{Rate Suggestion Updates}} 
The dual variables enable quantifying constraint violations. Next, we need to compute rates that would minimize constraint violations. Since we decompose this problem into independent subproblems, we first look at the subproblems tasked with enforcing capacity constraints. Each edge $e$ computes a ``suggested'' rate $y_{r}$ for every flow passing through it $x_{r}$. If the link is experiencing congestion (high $dual_{cap}$), it would ``suggest'' flows to back off. While deciding the extent to \textit{back off} for each flow, it would choose values such that no flow deviates disproportionately from their current rates $x_r$. The mathematical representation of these two considerations reflects in \cref{eq:penalty-b}, and the rate suggestions is computed in \cref{eq:rate-suggestion}, where $n_e$ is the number of paths traversing edge $e$. All $|E| * |P|$ instances of $y^{k+1}$ are computed in parallel.

 


\begin{equation}
    cost_{e,r} = \beta \left(-C_e - x_r^k\right) - dual_{cap_e}^{k+1} - dual_{con_{e,r}}^{k+1}
    \label{eq:penalty-b}
\end{equation}

\begin{equation}
    y_{e,r}^{k+1} = \left[ \frac{\bar{cost}_e - cost_{e,r}}{\beta} \right]_+ \quad \text{where} \quad \bar{cost}_e = \frac{1}{n_e} \sum_{r: e \in r} cost_{e,r}
    \label{eq:rate-suggestion}
\end{equation}



\paragraph{\textbf{Rate Updates}}
With the duals ($dual_*$), slack variables ($slack_*$), and rate suggestions ($y_{e,r}$) computed, we extract the optimal path rates. This \textit{local} update seeks to (i) optimize the objective \cref{eq:objective-primal}, (ii) minimize demand violations \cref{eq:demand-constraint-primal}, and (iii) maintain consensus with the link-level rate suggestions \cref{eq:consensus-reformulated-true}. 

\underline{Sum computation:} Because the utility objective and demand constraints operate on the aggregate flow across all its paths, this step first requires computing the optimal $S_{st} = \sum_{r \in P_{st}} x_{r}$, even before we can compute the individual $x_r$. What we have is an equation of the form $AS^{\alpha - 1} + BS + C = 0$, where $\alpha$ depends on the objective (\cref{eq:objective-primal}). This makes the problem significantly complex -- and we compute this using the parallel Newton-Bisection method~\cite{stoer2002numerical}, detailed in \cref{app:iterate-remainder:newton}.


With $S_{st}$ computed, we plug it back to compute $x_{r}^{k+1}$ via the following closed-form equation, which accounts for (i) objective function, (ii) non-negativity constraint and, (iii) demand constraints,
\vspace{-0.5em}
\begin{equation}
    \boxed{
    \begin{aligned}
        x_r^{k+1} = \frac{1}{(1+n_r)} \bigg[ 
        &\underbrace{\sum_{e\in r}(y_{e,r}^k-dual_{con_{e,r}}^k)}_{\text{Consensus with $y_r$}} + 
        \underbrace{\frac{(S_{st})^{-\alpha}}{\beta}}_{\text{Objective}} \\
        &- \underbrace{\big(S_{st}-D_{st} + dual_{dem_{st}}^k\big)}_{\text{Demand constraint violation}} \\ &+ \underbrace{dual_{neg_r}^k}_{\text{Non-negativity}} \bigg]
    \end{aligned}
    }
    \label{eq:rate-computation}
\end{equation}

\subsection{Convergence Optimizations \bg{Could we call this something like ``Optimizing the Convergence Process''? ``Tuning Updates'' is a bit hard to parse and ``tuning'' makes it sound a bit trivial}}
\label{sec:optimizing-convergence}
\paragraph{\textbf{Adaptive Learning}}

Performance of the iterative algorithm can be highly sensitive to the choice of $\beta$. It determines how quickly the iterations converge, and how close to optimal they converge. How do we choose a good learning rate? In each iteration, the rate updates $x_r$ attempt to maximize the Lagrangian $\mathcal{L}$ (\cref{eq:augmented-lagrangian}). It follows that a high $\beta$ would heavily penalize constraint violations such that the rate updates prioritize ``being feasible'' over ``finding the best allocation''. The algorithm is effectively ``stiff'', i.e. it satisfies limits but struggles to move towards the optimal solution ($U^*$). We capture the stagnation in rates using the \textit{dual residual} ($s^{k+1}$). It is defined as 
\begin{equation}
s^{k+1} = \| x_*^{k+1} - x_*^k \|
\label{eq:dual-residual}
\end{equation}

Conversely, if $\beta$ is too low, the rate updates place more weight on maximizing the objective $U^*$ than on minimizing constraint violations. While this yields a high objective value quickly, the resulting rates violate capacity or demand constraints, meaning they cannot physically fit on the network. We compute constraint violations using the \textit{primal residual} ($r^{k+1}$),
\begin{equation}
r^{k+1} = \|\text{dual}_{*}^{k+1} - \text{dual}_{*}^k\|
\label{eq:primal-residual}
\end{equation}

Our goal is for the iterates to balance optimizing the objective with enforcing constraints. We achieve this by dynamically updating $\beta$ to keep the primal and dual residuals within a similar order of magnitude.


We implement this using a \textit{Residual Balancing} technique~\cite{neal2011distributed}. This scheme inflates $\beta$ by a factor $C$ when the primal residual is much larger compared to the dual residual (signaling excessive constraint violation), and deflates $\beta$ by $C$ when the dual residual dominates (signaling slow convergence). \cref{eq:adaptive-learning} defines the $\beta$ update rule, with $A$ and $C$ as tunable constants. In \cref{s:evaluation:factoranalysis}, we show that \bgnew{our adjustment scheme for $\beta$ roughly halves the number of iterations needed to converge.}

\begin{equation}
    \begin{aligned}
        \boxed{
\beta^{k+1} := \begin{cases} 
C* \beta^k & \text{if } r^{k+1} > A * s^{k+1} \\
\beta^k / C & \text{if } s^{k+1} > A * r^{k+1}\\
\beta^k & \text{otherwise}
\end{cases}
}
    \end{aligned}
    \label{eq:adaptive-learning}
\end{equation}

\paragraph{\textbf{Convergence Criteria}}
Similar to adaptive learning rates, we also use the primal- and dual-residuals for the stopping criterion for \sysname. We terminate iterations when both the primal residual (indicating constraint feasibility) and the dual residual (indicating solution stationarity) fall below a threshold $\gamma$ (\cref{eq:convergence}). A low $\gamma$ yields a more optimal solution but requires more iterations. We discussed in \cref{sec:motivation:runtime-optimality-tradeoff} that we want to achieve a good tradeoff between runtime and optimality. We expect $\gamma$ to be selected by the operator based on their network topology and traffic patterns. For example, in \cref{s:evaluation}, we select $\gamma$ to optimize for a metric of our interest -- \textit{Drift Adjusted Optimality}. \bg{The phrasing there is a bit confusing -- ``we show that we select $\gamma$...'' -- do you mean in \cref{s:evaluation} we show \emph{how} to select $\gamma$?} \rb{Reworded. Can also drop this statement if it's confusing.} 

\begin{equation}
r^{k+1} \leq \gamma \quad \text{and} \quad s^{k+1} \leq \gamma
\label{eq:convergence}
\end{equation}

\begin{figure}[ht]
    \centering
    \includegraphics[width=\linewidth]{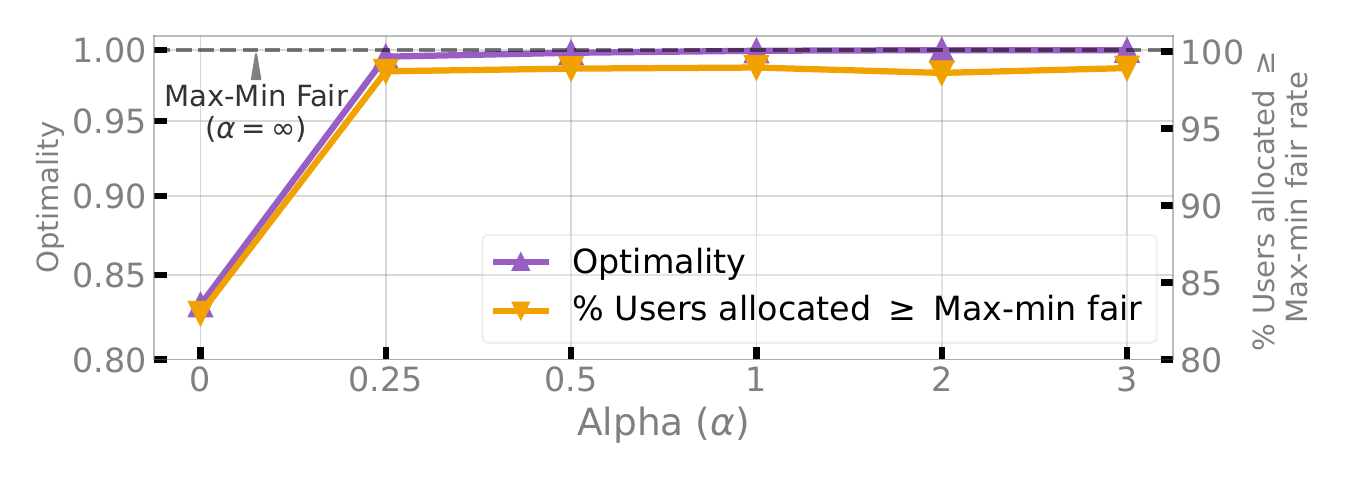}
    \caption{Optimality (relative to max-min fair) of $\alpha$-utility functions with increasing $\alpha$.}
    \label{fig:alpha-v-opt}
\end{figure}

\begin{figure}[ht]
    \centering
    \includegraphics[width=\linewidth]{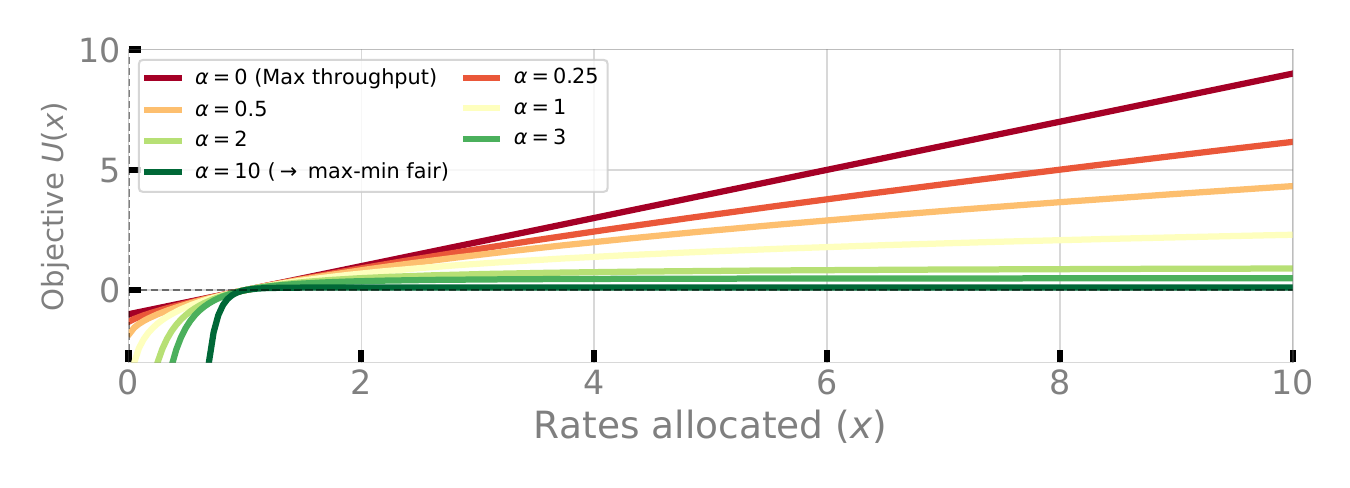}
    \caption{Curve of various $\alpha$-utility objective functions (larger $\alpha$ swings more).}
    \label{fig:x-v-alpha}
\end{figure}

\paragraph{\textbf{$\alpha$-increments}}
Direct optimization of the $\alpha$-fair utility function (\cref{eq:objective-primal}) for large values of $\alpha$ introduces numerical stiffness. When $\alpha \gg 1$, the utility curve becomes exceedingly sharp, as shown in \cref{fig:x-v-alpha}. It experiences large gradient swings at the beginning, followed by a nearly flat curve, that slows convergence within the ADMM \bgnew{iterations}.

To address this, we borrow inspiration from the Homotopy Analysis Method\cite{liao2004homotopy} -- i.e., solve a simpler problem and then modify it to the solution of a more complex solution. Instead of directly solving for a large $\alpha_{target}$, we restructure the problem to begin at $\alpha = 0$ (a purely linear, maximum-throughput objective). Once the ADMM iterations converge for a given $\alpha$ -- meaning the primal and dual residuals fall below the threshold $\gamma$ -- we increment $\alpha$ and resume optimization. This is beneficial for two reasons -- (i) it is easier to converge for smaller $\alpha$, and (ii) for large $\alpha$, the optimal allocations for $\alpha$ and $\alpha + 1$ are geometrically adjacent in the feasible space (as seen in \cref{fig:alpha-v-opt} and \cref{fig:x-v-alpha}), leading to fewer iterations of the ADMM loop than starting from scratch.

These characteristics naturally offer a unique stopping condition when solving for max-min fairness ($\alpha_{target} \rightarrow \infty$). While the $\alpha$-increment loop is theoretically unbounded, \sysname continually checks if the network state has stagnated upon an $\alpha$ increment (i.e. $residuals \leq \gamma$ in the next iteration after the increment).
When that happens, \bg{(1) That's confusing. It sounds like you need to run one iteration to check that nothing is changing, not zero iterations. (2) FOr my understanding ... you are really using \emph{exact same} as the condition? So, it's an even stricter condition than the usual $\gamma$ stopping criterion?} we know that \sysname has converged (under $\gamma$) for all higher $\alpha$, including max-min fair.
In \cref{appendix:thm proof}, we describe how this provably converges to the max-min fair allocation.

\subsection{Resolving Constraint Violations}\label{s:design:projection}
\rbnew{
While \sysname's subproblems provably converge to a feasible optimal solution, stopping the iterations early (e.g., when the residuals fall below the convergence threshold $\gamma$) can leave very small constraint violations in the final rate allocation $x_*$. It is important to resolve these violations before the solution is safe to push out on the network. Therefore, \sysname computes the projection of the nearly-feasible output into a strictly feasible one. In order to achieve this, GATE trims rate of multiple paths.
}
\paragraph{\textbf{Path prioritization for trimming}}
\rbnew{
While selecting paths to trim, the goal is to minimize the loss in objective. In order to do that, we compute a penalty score ($score_p$) for each path. This score accounts for two factors -- (i) the demand served by the commodity that the path belongs to ($S_c^\alpha$), and (ii) the number of edges on that path with constraint violations. Paths with a higher $(S_c)^\alpha$ are prioritized to trim, a heuristic that adheres to the $\alpha$-aware operator objectives. $C_p$ ensures that paths contributing to multiple constraint violations are prioritized for trimming.
}

\paragraph{\textbf{Projection Technique}}
\rbnew{
\sysname resolves constraint violations in three steps (as described in \cref{alg:projection}):
\begin{enumerate}
    \item \textbf{Non Negativity Constraints:} First, we project rates on all paths to [0, $\infty$).
    \item \textbf{Demand Constraints:} Next, for each commodity that was allocated more than it requested, we trim the excess. We greedily select paths to trim, in descending order of their scores ($score_p$). 
    \item \textbf{Capacity Constraints:} After demand is bounded, we resolve capacity constraints with a similar strategy. For each edge where the traversing paths exceed capacity, we trim rates on paths (selected in descending order of their scores) until the violation is resolved.
\end{enumerate}
}
\begin{algorithm}[t]
\caption{\sysname's $\alpha$-aware, parallelized, Projection.}
\label{alg:projection}
\small
\begin{algorithmic}[1]
\Statex\textbf{Input:} Path rates $x$ with possible constraint violations, link capacities $\{C_e\}$, demands $\{D_c\}$, fairness metric $\alpha$.
\Statex\textbf{Output:} Path rates $x$, with no constraint violations.

\State \textbf{Resolve negative rates:}
\State $x \gets \max(x, 0)$
\vspace{1mm}
\State \textbf{Auxiliary variables:} 
\Statex \hspace{1.25em} $S_c \gets \sum_{r \in P_c} x_r$, \quad $L_e \gets \sum_{r: e \in r} x_r$
\Statex \hspace{1.25em} Demand Violations: $E_c \gets \max(S_c - D_c, 0)$
\Statex \hspace{1.25em} Capacity Violations: $V_e \gets \max(L_e - C_e, 0)$

\Statex \hspace{1.25em} $C_p \gets \sum_{e \in p} \mathbb{I}(V_e > 0)$ \Comment{\# violations per-path}
\Statex \hspace{1.25em} $score_p \gets (S_{c(p)})^\alpha \times C_p$ \Comment{alpha-aware and violations-aware}

\vspace{1mm}
\State \textbf{Resolve demand violations:}
\For{each commodity $c$} \Comment{in parallel}
    \State Let $P_c^* = \{p \in P_c\}$, sorted in descending order of $score_p$.
    \State Iterate through $P_c^*$ until $E_c = 0$,
    \Statex \hspace{2.5em} $\delta x_p \gets min(E_c, x_p)$
    \Statex \hspace{2.5em} $x_{p} \gets x_p - \delta x_p$
    \Statex \hspace{2.5em} $E_c \gets E_c - \delta x_p$
\EndFor

\vspace{1mm}
\State \textbf{Recompute auxiallary variables:} ($S_c, L_e, V_e$, $C_p$, $score_p$)

\vspace{1mm}
\State \textbf{Resolve capacity violations:}
\For{each edge $e$} \Comment{in parallel}
    \State Let $P_e^* = \{p \in P \mid e \in p\}$, sorted in descending order of $score_p$.
    \State Iterate through $P_e^*$ until $V_e = 0$,
    \Statex \hspace{2.5em} $\delta x_{p} \gets \min(x_p, V_e)$
    \Statex \hspace{2.5em} $x_{p} \gets x_{p} - \delta x_p$
    \Statex \hspace{2.5em} $V_e \gets V_e - \delta x_{p}$
\EndFor

\State \Return $x$
\end{algorithmic}
\end{algorithm}

\subsection{Optimality Guarantee}\label{s:design:opt-guarantee}
\sysname's \textit{iterative, local sub-problems} converge to the \textit{globally optimal} max-min fair allocation. We provide the following theorem:

\begin{theorem}
\label{thm:alpha_admm_conv}
\bg{And now, what do we say about $\alpha=\infty$?} \rb{Fixed} \rbnew{For any $\alpha$-fair objective (including the max-min fair limit as $\alpha\to\infty$), \sysname's iterations asymptotically guarantee the following:\\
A. \emph{Constraint feasibility:} Capacity and demand constraints are satisfied, with the projection scheme resolving violations arising from early stopping under a numerical threshold $\gamma$.\\
B. \emph{Objective convergence:} The objective defined in \eqref{eq:objective-primal} reaches the optimal value (within $\gamma$-tolerance).\\
C. \emph{Consensus constraint:} The rate for a path $x_{r}$ and all rate suggestions $y_{e,r} \forall e \in r$ are equal.}
\end{theorem}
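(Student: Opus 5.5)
The plan is to reduce the statement to the standard two-block ADMM convergence theorem for each fixed $\alpha$, and then handle the passage $\alpha\to\infty$ separately through the $\alpha$-increment scheme. I would first rewrite the transformed problem (\cref{eq:capacity-constraint-reformulated,eq:consensus-reformulated-true} together with \cref{eq:demand-constraint-primal,eq:non-negative-primal}) in canonical form:
\[
\min\; f(x)+g(y) \quad \text{s.t.}\quad Ax-By=0 .
\]
Here $f(x)=-U^{*}(x)+\mathbb{I}_{\mathcal{X}}(x)$, where $\mathcal{X}$ is the polyhedron cut out by the demand and non-negativity constraints (with slacks as in \cref{app:iterate-remainder:slack}). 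Likewise $g(y)=\mathbb{I}_{\mathcal{Y}}(y)$, where $\mathcal{Y}=\prod_e\{y_{e,\cdot}\ge 0:\sum_r y_{e,r}\le C_e\}$. The linear constraint is the consensus $x_r=y_{e,r}$.

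For every fixed $\alpha\ge 0$, $U_\alpha$ is concave and nondecreasing on $\mathbb{R}_{\ge0}$, so $f$ and $g$ are closed, proper and convex. The feasible set is a nonempty compact polytope ($x=0$ is feasible), so an optimum exists. Since all constraints are affine, Slater/linear CQ gives strong duality and a saddle point of the unaugmented Lagrangian. I would then verify that the closed-form steps are exactly the exact block minimizations of $\mathcal{L}_\beta$:
\begin{itemize}
\item \cref{eq:rate-suggestion} is the Euclidean projection onto the simplex-like set $\mathcal{Y}$ per edge.
\item \cref{eq:sum-computation,eq:rate-computation} solve the first-order condition of the strictly convex $x$-subproblem. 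The Newton--bisection root is unique because $S\mapsto AS^{\alpha-1}+BS+C$ is monotone.
\item \cref{eq:dual-updates-dem,eq:dual-updates-cap,eq:dual-updates-con,eq:dual-updates-neg} are the scaled dual ascent steps.
\end{itemize}

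Invoking the classical ADMM theorem (\cite{neal2011distributed}, Appendix A) then yields three conclusions. The residual $Ax^k-By^k\to0$, which is part C (consensus). The objective converges, $U^*(x^k)\to U^*_{\mathrm{opt}}$, which is part B for fixed $\alpha$. The duals converge. For the adaptive $\beta$ in \cref{eq:adaptive-learning}, I would invoke the residual-balancing convergence result, which requires that $\beta$ changes only finitely often, or that $\sum_k(\beta^{k+1}/\beta^k-1)<\infty$. I would enforce this by freezing $\beta$ after a fixed number of iterations, which costs nothing asymptotically.

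Part A then follows from two facts. In the limit, $y^k\in\mathcal{Y}$ exactly and consensus holds, so $x^k$ approaches the feasible set. When the algorithm stops early with residuals at most $\gamma$, the violations are $O(\gamma)$. For \cref{alg:projection}, I would show feasibility directly:
\begin{itemize}
\item Clipping at zero gives non-negativity.
\item Per-commodity trimming removes exactly $E_c$, which is possible since $E_c\le S_c$.
\item Per-edge trimming removes exactly $V_e\le L_e$, and it only decreases rates, so no demand constraint is re-violated.
\end{itemize}
Because the trimmed amounts are bounded by the $O(\gamma)$ violations and $U_\alpha$ is Lipschitz on the relevant range (bounded away from $0$, or handled by continuity), the objective loss is $O(\gamma)$. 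This keeps B "within $\gamma$-tolerance."

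The main obstacle is the max-min limit $\alpha\to\infty$. For this I would use the known result that the $\alpha$-fair optimum $x^*(\alpha)$ converges to the max-min fair allocation as $\alpha\to\infty$ over a compact polytope (Mo--Walrand). The argument must address that the optimal path rates need not be unique, although the aggregate $S_{st}$ is unique by strict concavity for $\alpha>0$.

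I would combine that result with the fixed-$\alpha$ convergence to get the following guarantee. For each $\alpha$ in the increment sequence, ADMM reaches the $\gamma$-stationary point, which is warm-started from the previous one. The aggregates therefore track $x^*(\alpha)$ to within $O(\gamma)$. The stopping rule says that incrementing $\alpha$ no longer moves the iterate. I would argue that this means $x^k$ satisfies the KKT conditions for consecutive $\alpha$ within $\gamma$.

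The difficult point is showing that stagnation at one increment implies $\gamma$-optimality for all larger $\alpha$, and hence for the max-min limit. I expect this step to need an extra structural argument: near the max-min point, the bottleneck ordering of the aggregates is fixed, so the KKT multipliers can be rescaled across $\alpha$. Without it, only the asymptotic statement along $\alpha\to\infty$ would be rigorous.
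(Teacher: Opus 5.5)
\textbf{Part I.} Your fixed-$\alpha$ argument follows the paper's Part I: cast the iteration as 2-block ADMM, then check closed proper convex blocks, existence of a saddle point, and exact block minimizations. Several details need attention.
\begin{enumerate}
\item Your constraint qualification via $x=0$ fails for $\alpha\ge 1$. There $U_\alpha(0)=-\infty$, so $0\notin\operatorname{dom}$. You need a feasible point with every $S_{st}>0$. The paper builds one as $x^\circ_r=\varepsilon/|P_{st}|$ with $\varepsilon<\min_{st}D_{st}$ and $\varepsilon<\min_e C_e/\kappa_e$; this is where it uses $D_{st}>0$, $P_{st}\neq\emptyset$ and $C_e>0$.
\item Your splitting puts demand and non-negativity in $\mathbb{I}_{\mathcal X}$ and capacity in $\mathbb{I}_{\mathcal Y}$, which leaves consensus as the only coupling constraint. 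That splitting cannot produce \cref{eq:dual-updates-dem,eq:dual-updates-cap,eq:dual-updates-neg} or the unconstrained closed form \cref{eq:rate-computation}. The paper instead dualizes all four constraint families, using the slacks of \cref{app:iterate-remainder:slack}.
\item When you try to verify that \cref{eq:rate-suggestion} is an exact block minimizer, note that $C_e$ and $dual_{cap_e}$ cancel in $\bar{cost}_e-cost_{e,r}$. The printed update therefore does not depend on capacity at all. The paper asserts this step without checking it.
\item Two of your additions go beyond the paper. Requiring the residual-balancing $\beta$ to eventually freeze is a hypothesis the paper's appeal to the fixed-penalty theorem needs but never states. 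Your feasibility argument for Algorithm~\ref{alg:projection} also proves more about part A than the paper does.
\end{enumerate}

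\textbf{The max-min step.} The genuine gap is the one you flagged, and it cannot be closed. The paper imposes the $\gamma$-approximate KKT conditions \cref{eq:approx_kkt_alpha,eq:approx_kkt_alpha_plus_1} on the same $\tilde S_\alpha$. It then asserts that the allocation is locked on a saturated face, where every commodity is demand-limited or holds an equal share of a bottleneck. That inference is false even with $\gamma=0$.

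Here is a counterexample. Let commodity $u$ have paths $p_1,\dots,p_4$, and let commodity $w$ have a single path through links $e_1,\dots,e_4$. Path $p_i$ crosses $e_i$, all other links have ample capacity, $C_{e_i}=2$, $D_u=3.6$, and $D_w$ is large. The aggregate feasible set is $\{0\le S_u\le 3.6,\ 0\le S_w\le 2,\ S_u+4S_w\le 8\}$.
\begin{itemize}
\item At the vertex $V=(3.6,1.1)$ the normal cone is generated by $(1,0)$ and $(1,4)$.
\item The gradient $(S_u^{-\alpha},S_w^{-\alpha})$ lies in this cone iff $(3.6/1.1)^{\alpha}\le 4$. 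That holds for $\alpha=0$ and $\alpha=1$ but not for $\alpha=2$.
\item So $V$ is the unique optimum for both $\alpha=0$ and $\alpha=1$. Exact KKT holds at consecutive $\alpha$ on the same allocation, which is precisely the stagnation hypothesis at the algorithm's first increment.
\item Yet the max-min fair allocation is $(1.6,1.6)$, so the worst-off commodity gets $1.1$ instead of $1.6$.
\end{itemize}

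Your proposed structural lemma presupposes that the iterate is already near the max-min point, which is exactly what stagnation fails to certify. What you can prove is the asymptotic statement you sketch: exact (or $\gamma\to 0$) convergence at each stage, combined with convergence of the unique $\alpha$-fair aggregates to the max-min fair vector. Even there, the constant in your $O(\gamma)$ tracking depends on $\alpha$ through the scale of $S^{-\alpha}$, so it is not uniform. A finite-time max-min guarantee would need a different halting test, such as a direct max-min certificate, rather than a proof of the current one.
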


\textbf{Proof sketch:} Detailed in \cref{appendix:thm proof}.

\subsection{Implementation}\label{s:design:implementation}
We implement \sysname in \textasciitilde1,500 lines using PyTorch and CUDA. The implementation expects the inputs described in \cref{alg:gate} -- capacity edgelist, demand matrix, set of pre-computed paths. In our implementation, we use $A=10$ and $C=2$ in the adaptive learning scheme (\cref{eq:adaptive-learning}), taking inspiration from ADMM literature~\cite{neal2011distributed}. We select $\gamma$ depending for each topology, to optimize for the right tradeoff between runtime and optimality (DAO). \rb{The previous line is repeated from 3.4. @Brighten: which one to keep?} We find $\gamma = 10^{-3}$ as a good convergence threshold in our evaluations. \sysname outputs the rates over each path.

%% file: 4-evaluation.tex
\section{Evaluation}\label{s:evaluation}


We aim to evaluate two facets of \sysname's performance: (1) how well \sysname's traffic engineering allocation satisfies the goals and priorities operators have (\cref{sec:motivation:te-objectives}) for a given network, and (2) how well this performance holds up as network sizes grow. We start by formally describing the metrics we use to evaluate performance on operator goals. We then compare \sysname to other SOTA TE systems, examine how \sysname converges over iterations, examine its scaling behavior, and conclude by presenting a factor analysis showing how \sysname's design elements work together to enable its effective performance.



\subsection{Metrics}\label{s:evaluation:metrics}

We rely on two standard independent metrics – optimality and runtime – to characterize TE solutions. We additionally introduce a new third metric, Drift Adjusted Optimality, that captures their joint impact.

\noindent \textbf{Optimality.} We report optimality as the distance from the optimal allocation ($OPT$) for the given objective, typically max-min fair. For an allocation $X$, we compute $min(\frac{X_d}{max(OPT_d, \vartheta)}, 1)$ for every demand d, and report optimality as the mean across all demands. This is in-line with metrics used to evaluate production TE systems \cite{b4, swan}. This is a modification of the $q_\vartheta$ metric used in recent works \cite{soroush, marcus2019neo, lu2021pre} to not penalize allocations for serving more traffic than the optimal allocation.





\vspace{0.2em}
\noindent \textbf{Runtime.} We report runtime as the time taken by each algorithm to compute the allocation assuming a set of paths have already been computed (matching prior work~\cite{ncflow,soroush,DeDe, swan, Danna}).  We measure CPU runtimes on a 128-core AMD EPYC with 240GB memory. We measure GPU runtimes with a 1x NVIDIA A100 40GB GPU.

\vspace{0.2em}
\noindent \textbf{Drift Adjusted Optimality (DAO).} A TE algorithm uses the network state at time $T_{0}$ as input ($demands_{0}$, $capacities_{0}$), and generates an allocation after runtime $K$. In that time, the network state can drift to ($demands_{K}$, $capacities_{K}$). We define DAO as a metric to capture the effect of this drift on the actual optimality experienced by flows, computing $DAO_{K}$ as the optimality of an allocation using inputs from $T_{0}$ applied to the conditions at $T_{K}$. This effectively captures the penalty in optimality incurred by algorithms for taking longer to solve. We note DAO is a conservative metric (i.e. overly kind), as drift increases further after the path allocations get programmed until the next solution is computed, whereas we effectively assume drift stops at programming time.

\subsection{Experiment Setup}\label{s:evaluation:setup}

We evaluate \sysname over a wide range of network topologies and sizes. We use data from four publicly-available WAN topology datasets, as well as data sourced from two production topologies (\wansmall and \wanlarge) operated by a large-scale network operator, with sizes shown in~\cref{t:topology-sizes}.

\begin{figure*}[t]
\centering
\captionsetup{font=footnotesize, labelfont=normalfont, textfont=normalfont} 
  \begin{minipage}[b]{0.30\textwidth}
    \centering
    \renewcommand{\arraystretch}{1.1}
    \rowcolors{1}{gray!30}{gray!10}
    \begin{tabular}{lrr}
        \textbf{Topology} & \textbf{\#Nodes} &
         \textbf{\#Edges} \\ \hline
        Abilene  & 11 & 28 \\
        \geant    & 23 & 122 \\
        ESNet    & 68 & 158 \\
        Cogentco & 197 & 478 \\
        \wansmall & $O(100)$ & $O(100)$ \\
        \wanlarge & $O(1000)$ & $O(1000)$ \\
    \end{tabular}
    \vspace{0.1in}
    \captionof{table}{WAN topologies used for \sysname's evaluation.}
    \label{t:topology-sizes}
  \end{minipage}
  \hfill
  \begin{minipage}[b]{0.30\textwidth}
    \includegraphics[width=\linewidth]{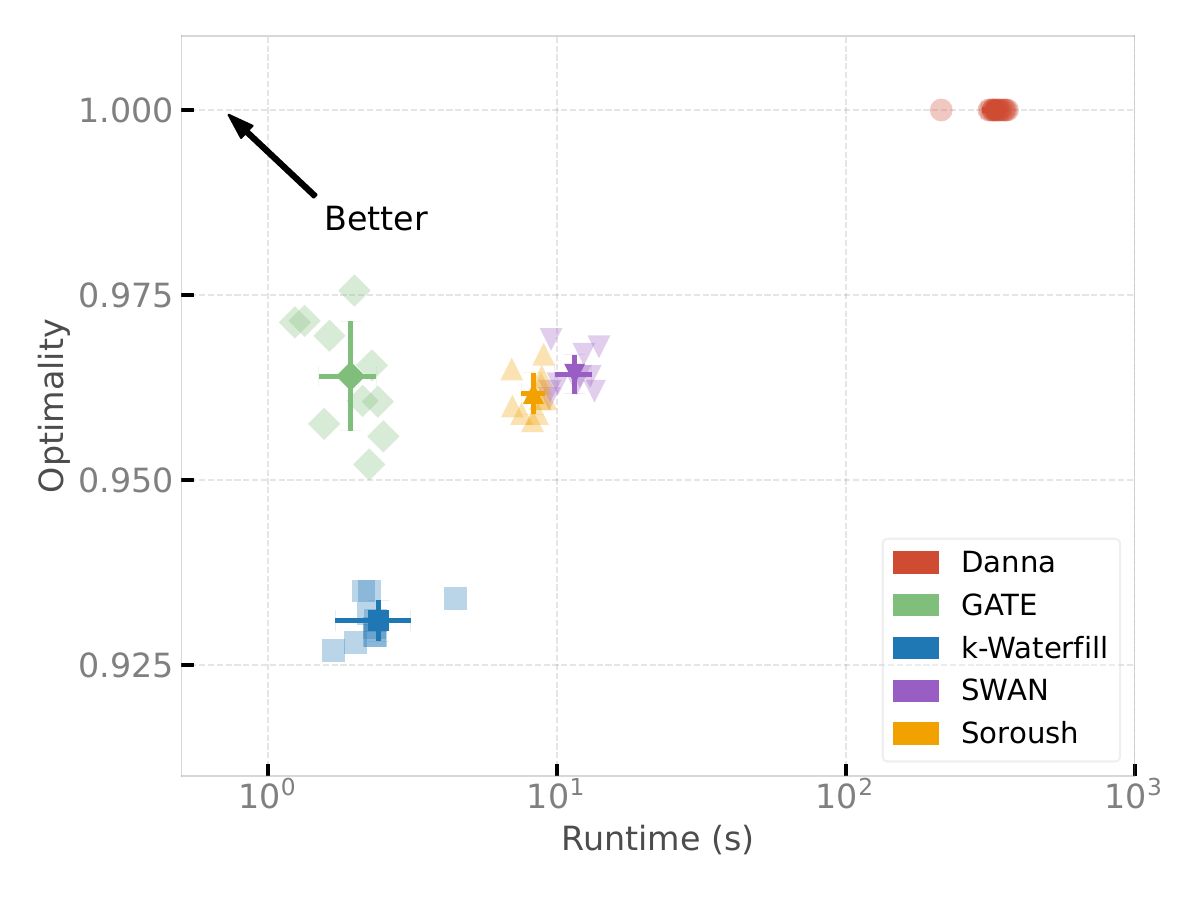}
    \caption{Runtime and Optimality for different TE algorithms over \wanlarge.}
    \label{fig:runtime-v-optimality-b2}
  \end{minipage}
  \hfill
  \begin{minipage}[b]{0.30\textwidth}
    \includegraphics[width=\linewidth]{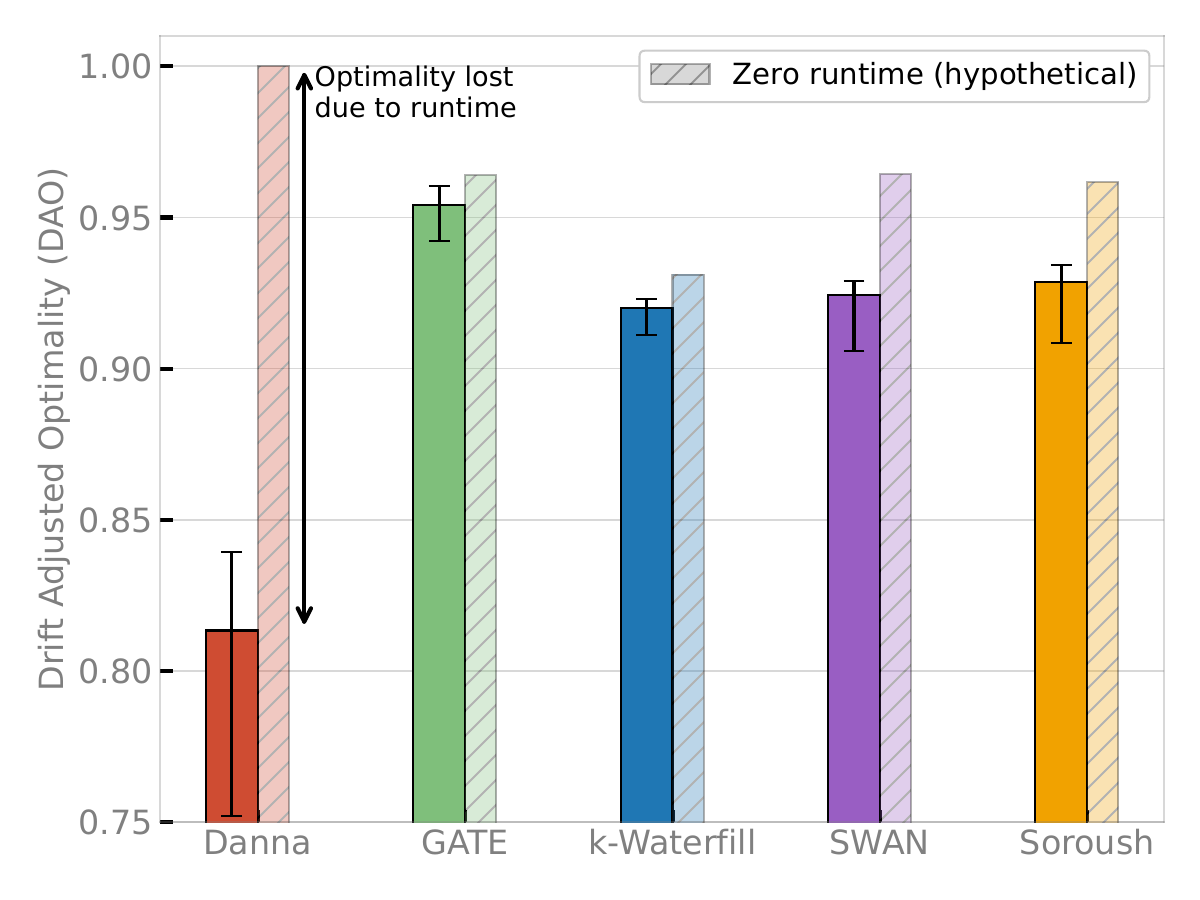}
    \caption{Drift Adjusted Optimality for different TE algorithms over \wanlarge.}
    \label{fig:dao-v-algorithm}
  \end{minipage}
\end{figure*}

\begin{figure*}[t]
\centering
\captionsetup{font=footnotesize, labelfont=normalfont, textfont=normalfont} 
  \begin{minipage}[b]{0.30\textwidth}
    \includegraphics[width=\linewidth]{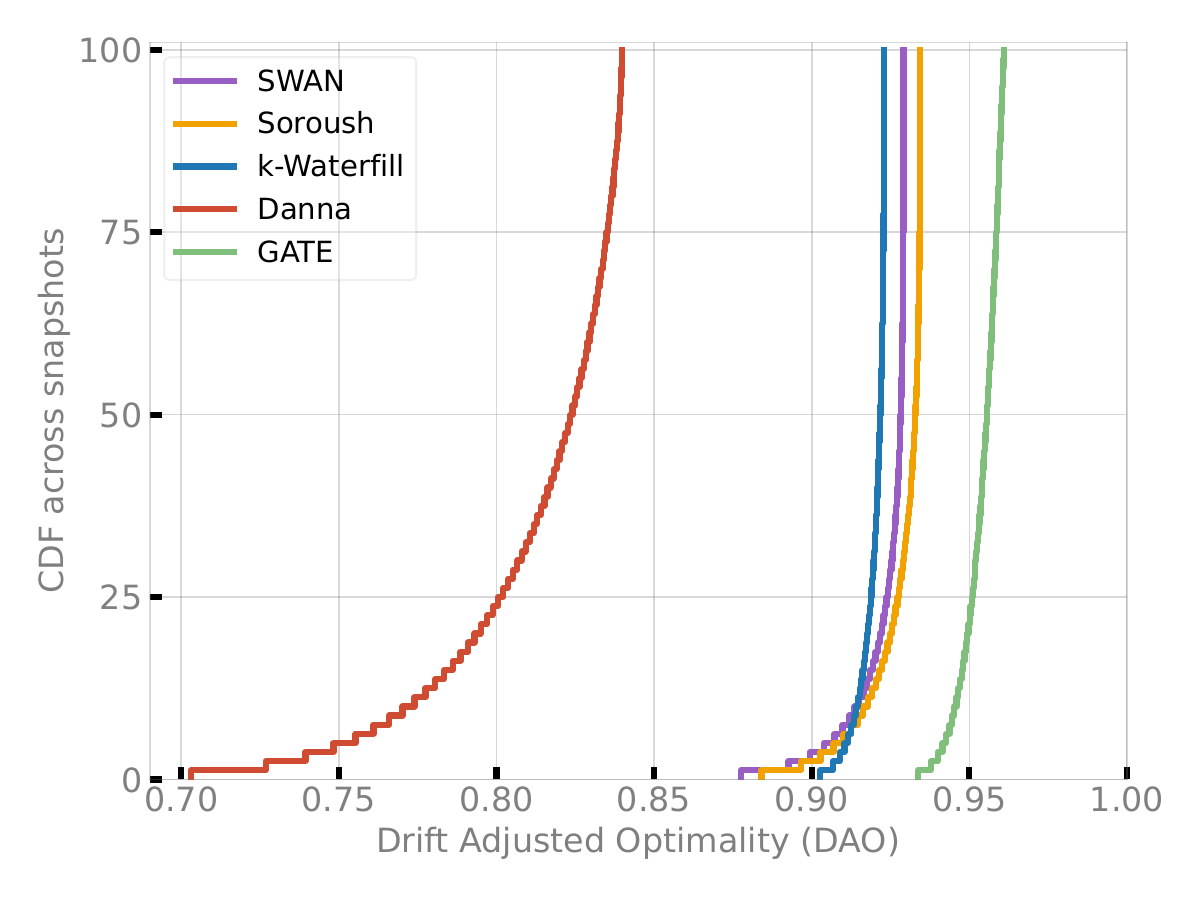}
    \caption{CDF of Drift Adjusted Optimality experienced over 80 snapshots of \wanlarge.}
    \label{fig:dao-cdf}
  \end{minipage}
  \hfill
  \begin{minipage}[b]{0.30\textwidth}
    \includegraphics[width=\linewidth]{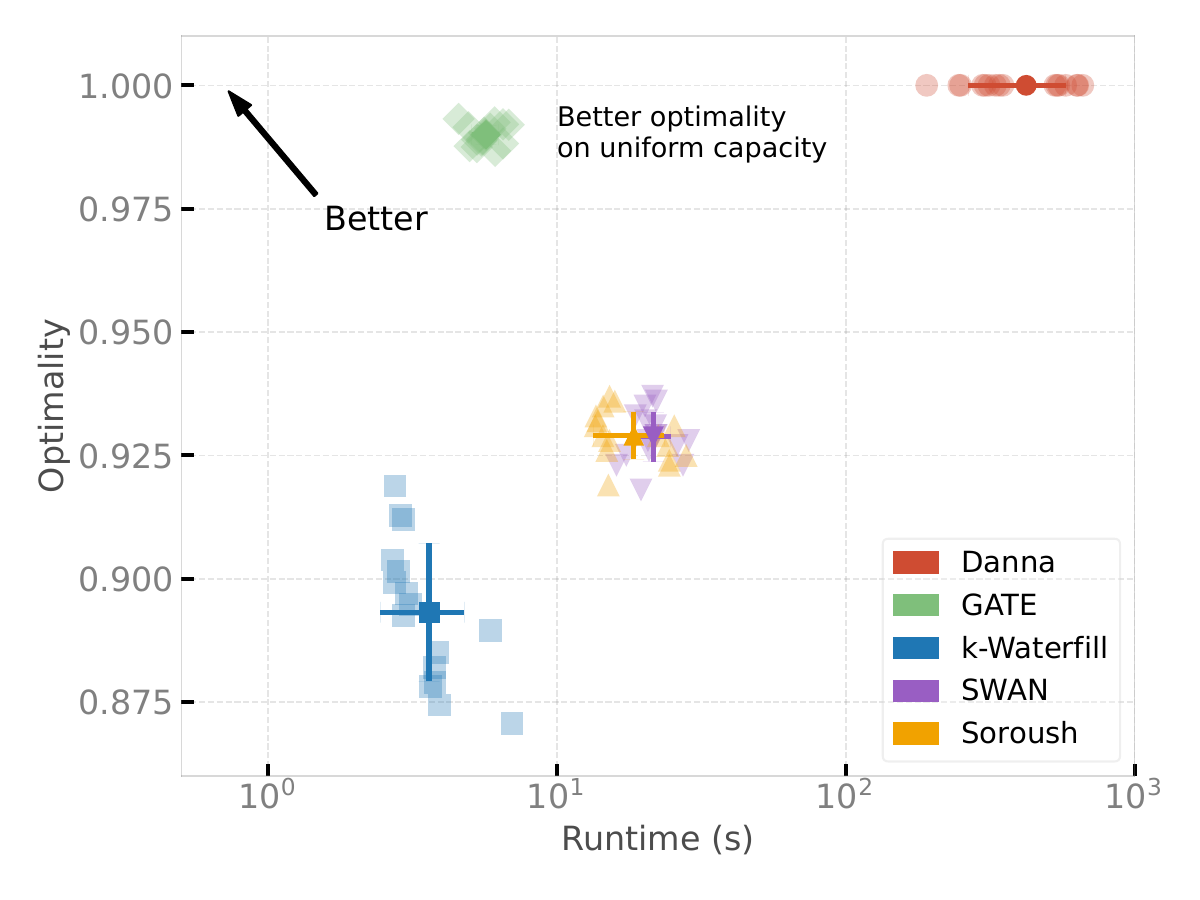}
    \caption{Runtime and Optimality for TE algorithms over Cogentco with uniform link capacities.}
    \label{fig:runtime-v-optimality-cogentco}
  \end{minipage}
  \hfill
  \begin{minipage}[b]{0.30\textwidth}
    \includegraphics[width=\linewidth]{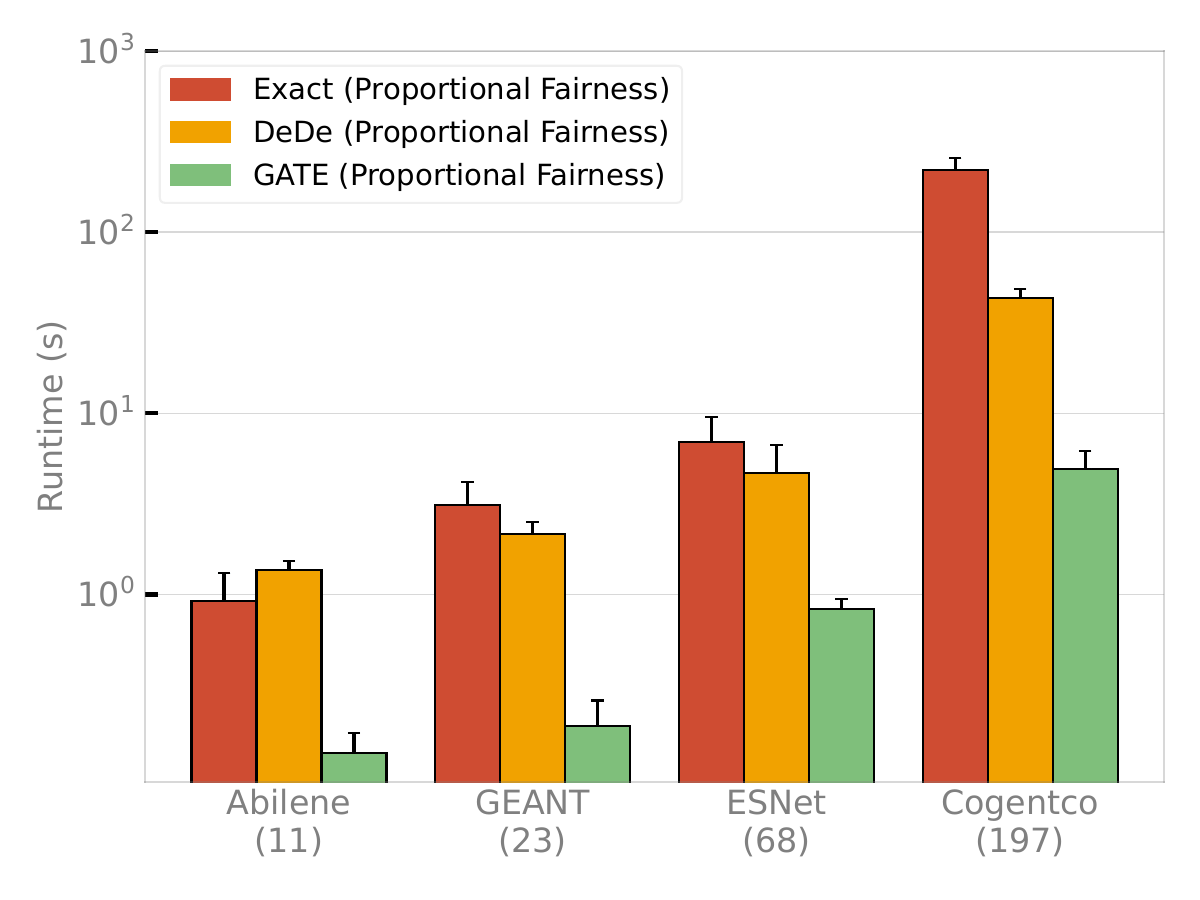}
    \caption{Runtimes for decomposition-style TE algorithms across topologies of various sizes.}
    \label{fig:decomp-style-eval}
  \end{minipage}
\end{figure*}

\noindent \textbf{TE Inputs.} For \wansmall, we collected the live production stream of aggregated topology and demand data that was taken as input by the operator's production TE system over three months in 2025, with data points every twelve seconds. Likewise for Abilene, we use the publicly available topology and demand snapshots~\cite{sndlib, topohub}, reported every five minutes.

For the rest of our datasets, the available traffic demands are less complete. Specifically, for EsNet, Cogentco, and \geant, we use the static topology provided and estimate a realistic demand matrix via the gravity model used in prior work~\cite{gravity-model, ncflow}. For \wanlarge, we pull historical snapshots of representative demand and topology every three months, over a three year period from January 2020 to December 2022 to capture network growth. Then, to produce a series of realistically-spread timestamped matrices and topologies, we measure the distribution of demand flow shifts (as percentages) between subsequent datapoints observed in production for \wansmall, and sample from that distribution. The result is a series of 1000 realistic input snapshots.
\rbnew{For all topologies, we use the k-shortest paths as input, with k=4 restricted by hardware constraints.}
We use these as the inputs to \sysname and the SOTA algorithms as comparison.


\vspace{0.2em}
\noindent \textbf{Streaming data.} 
To evaluate \sysname's operation in a continuous telemetry streaming regime, we approximate the underlying telemetry datastream for each of our datasets by interpolating the difference between each pair of TE input data points: for every consecutive pair of snapshots separated by time $T$ with $N$ values updated, we uniformly spread the $N$ updates in time $T$.

\vspace{0.2em}
\noindent \textbf{Benchmarks.}
We compare \sysname with recent SOTA TE solutions: Soroush~\cite{soroush}, SWAN~\cite{swan}, k-Waterfill~\cite{k-waterfill}, and Danna~\cite{Danna}. Additionally, we compare \sysname with DeDe~\cite{DeDe}, NCFlow~\cite{ncflow}, and POP~\cite{pop} to evaluate performance against other decomposition techniques. For all the TE solutions, we used the open-source implementations by \cite{soroush, DeDe}.To be as fair as possible in comparisons, for each topology we perform parameter tuning for Soroush and SWAN ($\alpha \in \{1.5, 2, 4, 8\}$) and DeDe ($\rho \in [10^{-4}, 10^{3}]$) and pick the parameters that result in the highest DAO. We use $k=1$ for k-Waterfill, which has the lowest runtimes. We warm-start SWAN, Danna, \sysname, and DeDe with the optimal allocation from the network state 300 seconds prior.\footnote{We note that in practice, the warm-start can be done from a more recent allocation (depending on the algorithm runtime), but we choose 300 seconds as a conservative measure.}

\subsection{Cumulative Performance}\label{s:evaluation:runtime-v-optimality}

We first look at \sysname's cumulative performance on the series of aggregated inputs a TE system would receive for each of our dataset networks. 
    
\vspace{0.2em}
\noindent\textbf{\sysname pushes the Pareto frontier of the runtime optimality tradeoff compared to existing TE solutions.} \cref{fig:runtime-v-optimality-b2} compares the mean runtime and optimality of \sysname and other TE solutions over production traces from \wanlarge. \sysname offers a new operating point, considerably stretching the Pareto frontier closer to optimal. The error bars represent the $p5$ and $p95$, and the shaded points represent individual runs (randomly sampled to avoid clutter). \sysname takes \textasciitilde 1sec to compute an allocation that is $95\%$ of optimal, achieving a speedup of 5-10$\times$ compared to SWAN and Soroush to achieve similar quality solutions. Danna achieves the optimal allocation, but incurs $\approx 200\times$ longer runtimes compared to \sysname.  K-waterfilling has similar runtimes as \sysname, but gets 3-5\% worse optimality than \sysname, with serving 8\% less demand than \sysname, which translates to \$ millions in increased network provisioning costs. \cref{fig:runtime-v-optimality-cogentco} shows these results hold and even improve further on Cogentco, a smaller network with uniform capacity links.



\vspace{0.2em}
\noindent \textbf{\sysname achieves the highest DAO amongst evaluated TE solutions.}
\cref{fig:dao-v-algorithm} shows DAO of different algorithms computed from the same experiment as \cref{fig:runtime-v-optimality-b2} (over \wanlarge), showing the combined impact of runtime and optimality. The shaded values represent optimality of the algorithm, to indicate the degradation in optimality due to runtime. Danna experiences a 14\% degradation from its optimal value because of the significantly long runtimes ($O(100)$ seconds). SWAN and Soroush experience a noticeable degradation. While their optimality was comparable to \sysname, their DAO is 3-4\% lower. \sysname and k-Waterfill experience the least degradation because of their low runtimes. However, k-Waterfill had lower optimality to begin with.

\sysname's shorter runtime enables it to perform disproportionately better in the worst scenarios, which is visible in the lower percentiles of the CDF of DAO in \cref{fig:dao-cdf}, corresponding to cases of large capacity changes. This is because we both reconverge quickly, and to a better state. Note that each data point in this CDF is a single snapshot; this actually dilutes some of \sysname's fast reaction benefits, because a link failure only affects some flows in a snapshot. Fast reaction to failure is more directly captured by runtime. We provide a comparison of total traffic served across algorithms in \cref{sec:appendix:eval-continued}.



\vspace{0.2em}
\noindent \textbf{\sysname is an order of magnitude faster than the prior decomposition based approaches.} \cref{fig:decomp-style-eval} compares the runtime of \sysname to the SOTA decomposition-based TE, DeDe, as well as an exact solver, across different topologies\footnote{The DeDe implementation failed to run (crashed) on \wansmall and \wanlarge, though we do not believe this is a fundamental problem.}. We compare on proportional fairness rather than the more common max-min objective because DeDe cannot solve for max-min. 
Both DeDe and \sysname get within 1.5\% optimality of the exact solution. However, \sysname is consistently $10\times$ faster than DeDe across a range of different topologies, in large part due to \sysname (1) not using Gurobi, as our subproblem formulation is further decomposed, \bgnew{(2) leveraging GPU,} and (3) using adaptive learning rates, as we discuss in \cref{s:evaluation:factoranalysis}.

\begin{figure*}[t]
\centering
\captionsetup{font=footnotesize, labelfont=normalfont, textfont=normalfont} 
  \begin{minipage}[b]{0.30\textwidth}
    \includegraphics[width=\linewidth]{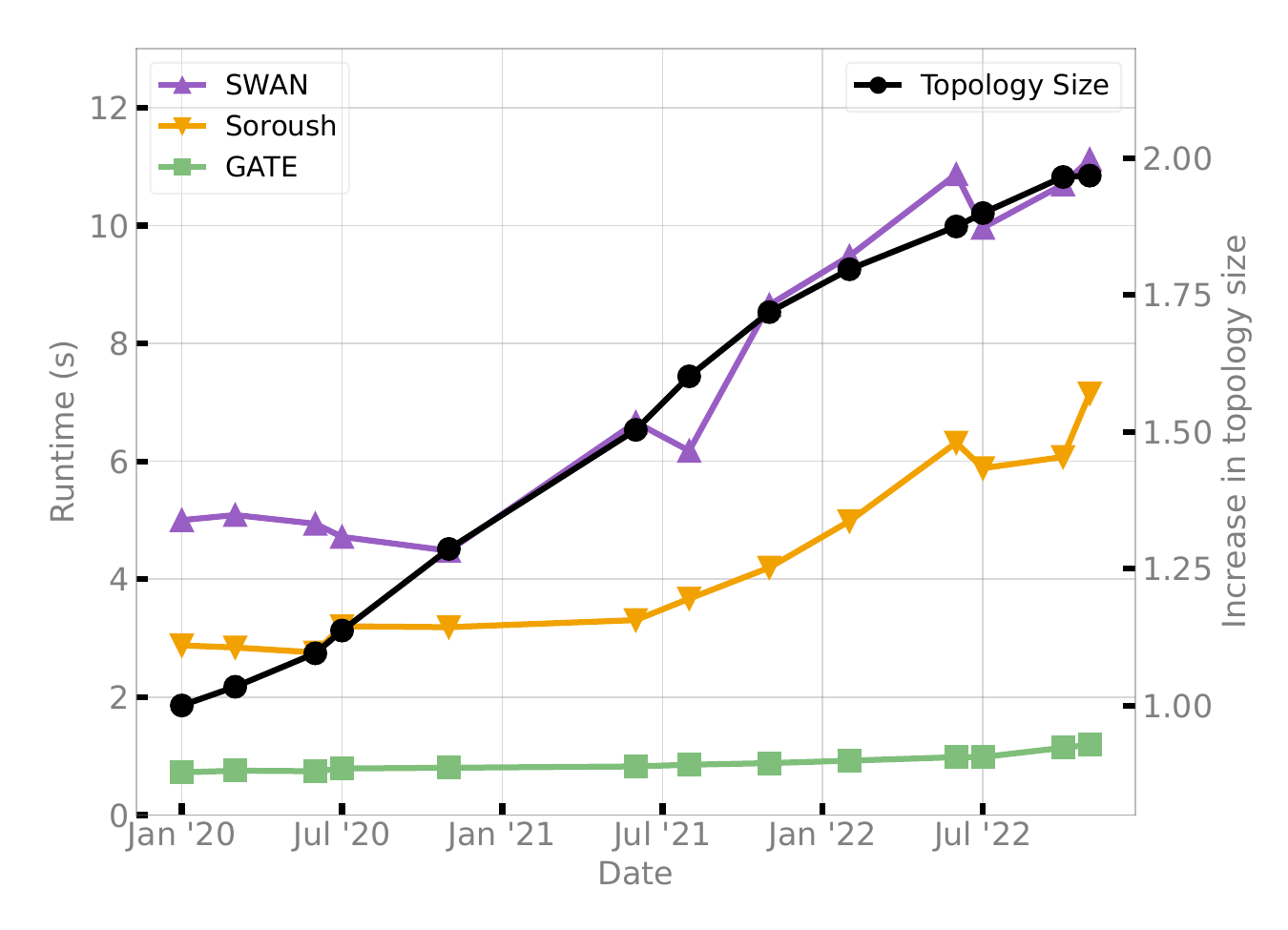}
    \caption{\textit{Runtimes of TE algorithms over a period of 3 years as \wanlarge grows in number of routers.}}
    \label{fig:runtime-v-topology-size-linear}
  \end{minipage}
  \hfill
   \begin{minipage}[b]{0.30\textwidth}
    \includegraphics[width=\linewidth]{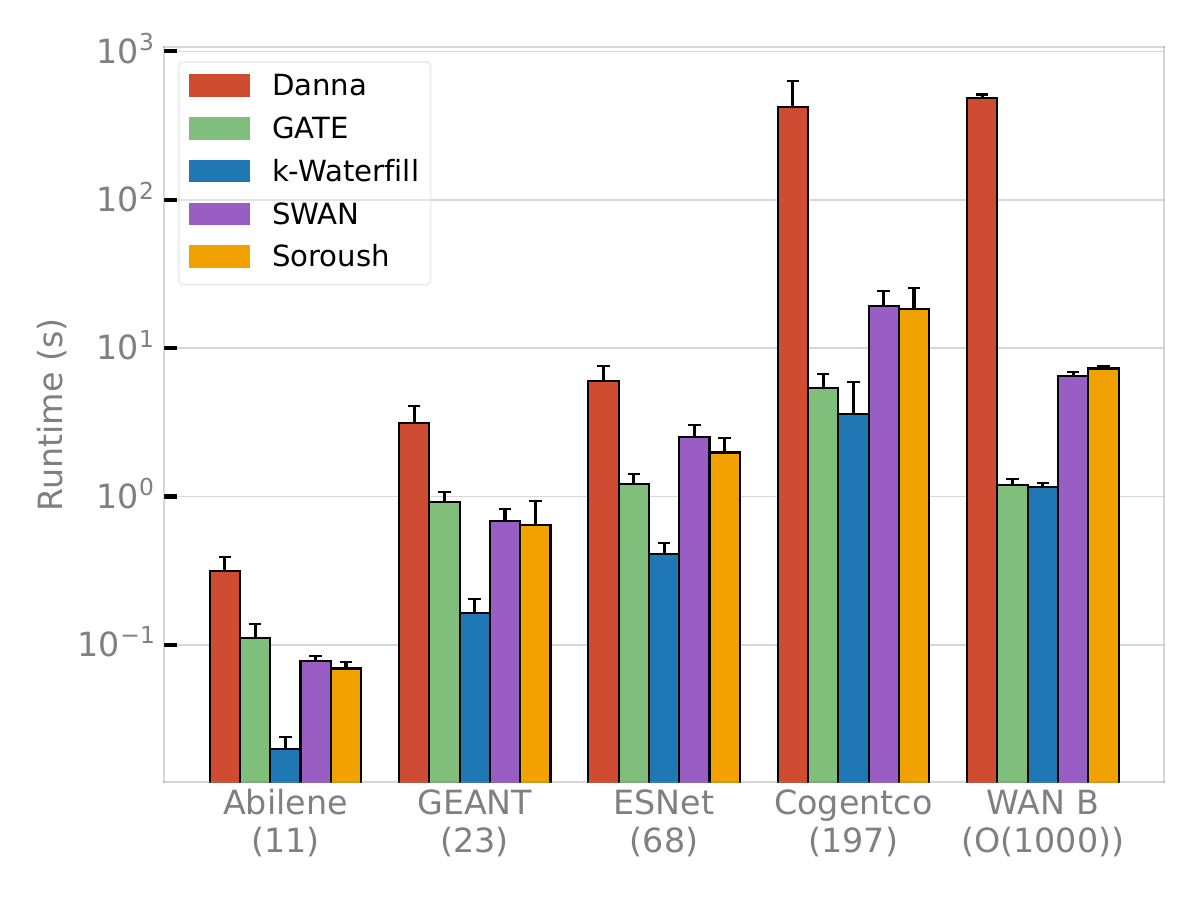}
    \caption{Runtimes for different TE algorithms across topologies of various sizes.}
    \label{fig:runtime-v-topology}
  \end{minipage}
  \hfill
  \begin{minipage}[b]{0.30\textwidth}
    \includegraphics[width=\linewidth]{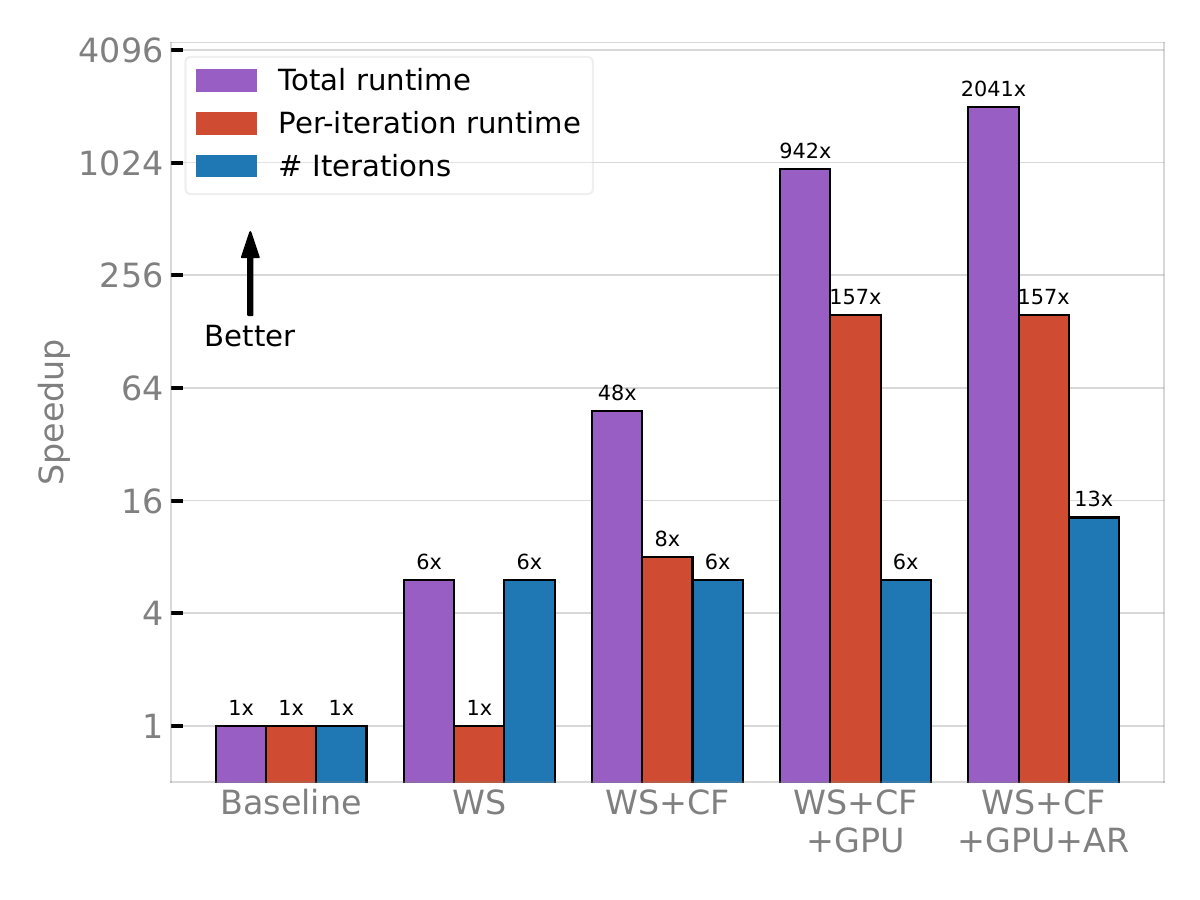}
    \caption{\textit{\sysname's speedup with Warm Start (WS), Closed Form (CF), GPU, and Adaptive Rate (AR). Note log scale y axis.}}
    \label{fig:tuning-benefits}
  \end{minipage}
\end{figure*}

\subsection{Looking Inside \sysname's Convergence}\label{s:evaluation:iterates-behavior}
\begin{figure}[]
    \centering
    \includegraphics[width=1.01\linewidth]{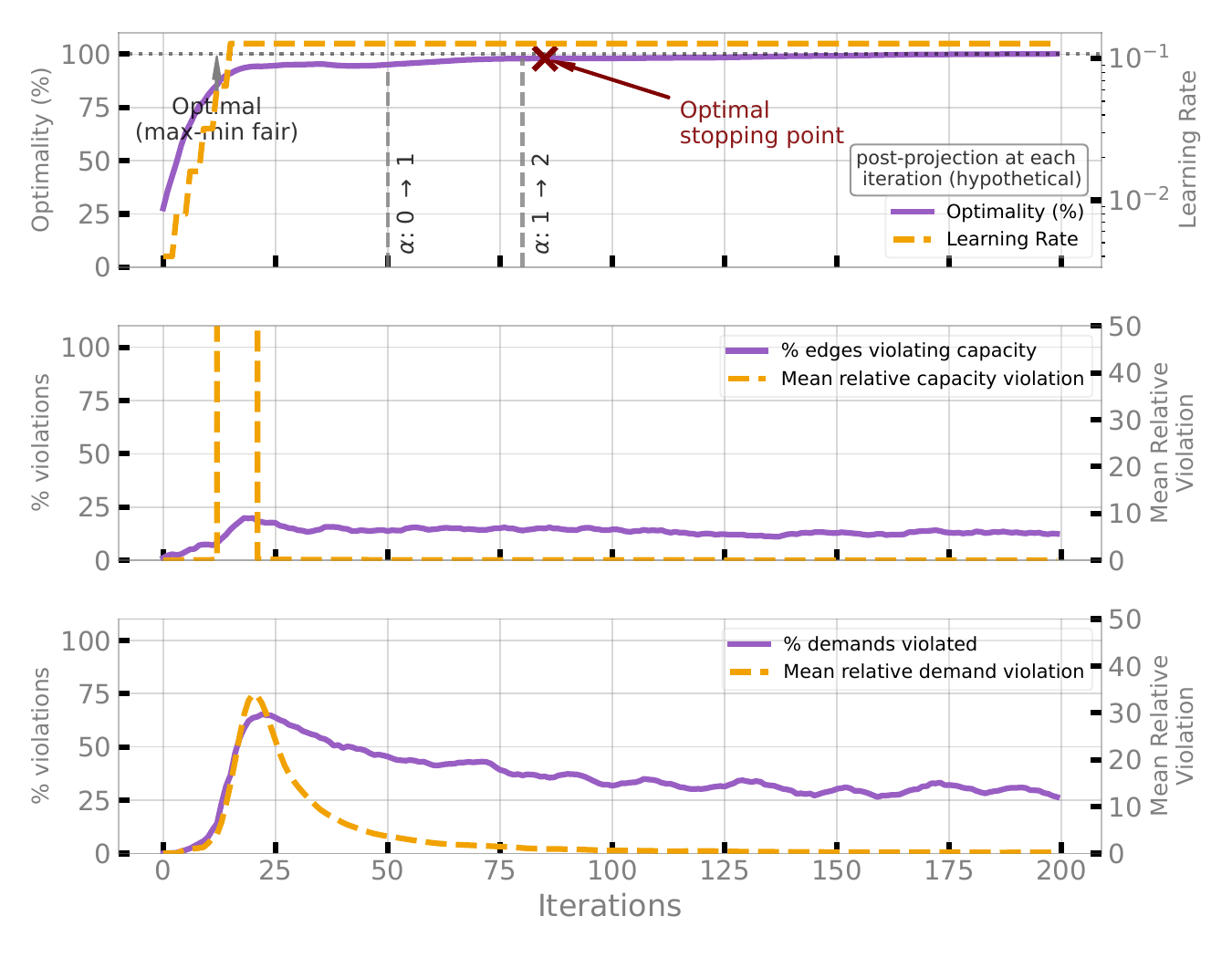}
    \caption{\sysname's intermediate values with \wansmall as input -- (a) top: \bgnew{optimality after projection} and learning rates, (b) middle: Capacity violations (without projection) and, (c) bottom: Demand violations (without projection). \bg{The subfigures are missing labels (a),(b),(c)} \bg{If possible, add a vertical line in the figure to show that we did increment $\alpha$ to 3, just at the time of stopping.}}
    \label{fig:gate-internals}
\end{figure}

\bgnew{We now provide insight into how \sysname works by seeing inside its convergence process, shown in \cref{fig:gate-internals} for \wansmall. Note that \sysname's convergence can be stopped at any point -- either earlier or later than its default stopping criterion (\S\ref{sec:optimizing-convergence}) -- after which projection (\S\ref{s:design:projection}) runs and produces a feasible solution free of constraint violations. The figure shows \emph{post-projection} optimality\footnote{To be clear, ordinarily \sysname only runs projection once at the end. Here, we run projection after each iteration to illustrate what would have resulted from stopping \sysname's convergence after that iteration.} (top) and \emph{pre-projection} constraint violations (middle and bottom).}

In the first $\approx 20$ iterations, \sysname makes steady progress filling in the network with traffic. For nearly the same period, the learning rate $\beta$ also increases, since \bgnew{path} rates are increasing (primal residuals) faster than constraint violations are increasing (dual residuals).

\bgnew{Around 20 iterations, network utilization has filled up enough that increasing rates starts to violate constraints (\cref{fig:gate-internals} (b) and (c)). Beyond 20 iterations, the Lagrangian variables are such that rate changes tend to balance between} resolving constraint violations and maximizing the objective. Interestingly, while the percentage of constraints violated does not decrease significantly, the mean relative violation drops sharply and becomes less than 1\% within 75 iterations.

After \textasciitilde50 iterations, the primal and dual residuals fall below $\gamma$, i.e. $10^{-3}$. At this point, \sysname increases $\alpha$, in order to progress towards \bgnew{max-min fairness}. Note that $\alpha = 0$ itself achieves nearly 90\% of optimal. \sysname takes fewer iterations to converge on $\alpha=1$, as much of the progress has already been made.
After $\approx 85$ iterations, $\alpha$ increases to $2$, and progresses for $<10$ iterations before increasing again. At this point, \bgnew{on the first iteration with $\alpha=3$, \sysname remains converged (residuals $<\gamma$). This is \sysname's usual stopping criterion. In the figure, we show what would happen if we continue anyway.  Note that while optimality} increases by a small amount even after the stopping point, the additional runtime required results in an effective loss in DAO.

\subsection{Scaling Behavior}\label{s:evaluation:scaling-behavior}

Next, we look at how \sysname's performance varies across different topologies and increasing network sizes.

\vspace{0.2em}
\noindent \textbf{\sysname's runtime grows much more slowly with network size compared to existing SOTA TE approaches.}  \cref{fig:runtime-v-topology-size-linear} shows runtimes for SWAN, Soroush, and \sysname on the historical data from \wanlarge, totaling growth in number of routers by a factor of two over a three-year period. The runtimes for SWAN, Soroush, and \sysname increase by 2.22x, 2.48x, and 1.64x respectively. \sysname enjoys a significantly more modest increase due to increasing parallelization, which keeps the per-iteration runtimes low. As the topology size becomes twice of the original, the per-iteration runtime grows only to \textasciitilde1.11x. In that same period, the number of iterations required to converge increases by \textasciitilde1.45x of the original.


\vspace{0.2em}
\noindent \textbf{\sysname outperforms other SOTA TE on all but the smallest (sub-30 router) topologies, with a widening gap as network size grows.}
\cref{fig:runtime-v-topology} compares the runtime of different TE over a variety of WAN topologies of different sizes. We observe that while \sysname (0.102s) is slower on smaller topologies like Abilene than SWAN (0.0777s) and Soroush (0.0693s) -- because there is less benefit to the key parallelization technique \sysname employs -- it beats SOTA alternatives on topologies larger than \geant (23 nodes). \aknew{For large topologies such as those being used by the hyperscaler in \wanlarge (and therefore representing today’s real-world deployments),} \sysname actually comes to match the k-Waterfill approximation's runtime, while providing \textit{significantly} more optimal traffic placement (\cref{s:evaluation:runtime-v-optimality}).


\vspace{0.2em}

\subsection{Factor Analysis}\label{s:evaluation:factoranalysis}

We now analyze how \sysname's design decisions contribute to its strong performance. Naively applying ADMM to TE results in exceedingly long convergence times and performance highly sensitive to careful parameter tuning~\cite{neal2011distributed}. We discuss the impact of the techniques we exploit to help.

\vspace{0.2em}
\noindent \textbf{Our optimization techniques provide a cumulative \linebreak $\boldsymbol{2041\times}$ speedup over naive ADMM decomposition.} \cref{fig:tuning-benefits} shows the incremental impact of each additional technique we applied in our design when run on \wanlarge. The total runtime of \sysname $=$ \# iterations $\times$ per-iteration runtime. We observe how each technique independently affects either component of total runtime compared to a baseline without any techniques:
\begin{itemize}[nosep,leftmargin=1.5em]
    \item \textbf{Baseline:} We replaced our closed-form decomposition of Eq. (12-15) with a maximization problem similar to that used by DeDe, and call Gurobi to solve it. Otherwise, everything is identical to \cref{alg:gate}.
    \item \textbf{Warm start (WS)} decreases \# iterations by $6\times$, by seeding with allocations from prior run's output.
    \item \textbf{Closed form (CF)}: We restore our closed-form decomposition, so that Gurobi is no longer needed, but execute the solution in the machine's 64 CPU threads (not GPU). This decreases the per-iteration runtime by $8\times$.
    \item \textbf{GPU}: decreases per-iteration runtime by $20\times$, by enabling massive parallelization from running threads on GPU instead of CPU.
    \item \textbf{Adaptive rate (AR)}: decreases \# iterations by $2.2\times$, by dynamically adjusting learning rate during convergence (i.e., WS+CF+GPU+AR restores the full implementation of \sysname.)
\end{itemize}
\noindent These optimizations combine to the cumulative convergence speedup of three orders of magnitude.

%% file: 6-conclusion.tex
\section{Conclusion}

In this paper, we presented \textbf{\sysname} (\fullsysname), \bgnew{which addresses the fundamental tension between runtime scalability and solution optimality in WAN traffic engineering. \sysname uses a Lagrangian decomposition, but with closed-form, highly parallelizable subproblems tailored specifically for GPU architectures, and adaptive learning rates. This design not only allows us to guarantee theoretical convergence to the global optimum but also outperforms approximate solvers in both optimality and runtime.}

We believe \sysname provides a valuable new point in the design space that is flexible and future-proof, \bgnew{making it practical for real-world deployment.} This is not only because of the strong performance improvements we report on both runtime and optimality, but also \sysname's qualitative benefits: being able to stop computation at any point early for a suboptimal solution, and supporting a diverse set of objectives.

\bgnew{There are several avenues of future work. Further algorithmic improvements may be possible: 
we have observed that \sysname converges better on uniform capacity networks than nonuniform (cf. \cref{fig:runtime-v-optimality-cogentco} vs. \cref{fig:runtime-v-optimality-b2}) \bg{broken reference}\rb{fixed} which suggests that it might be possible to accelerate convergence (and thus improve optimality in a given amount of time) with learning rates that are adaptive across parts of the topology, rather than just across iterations.}
Future work could also explore using \sysname as a base for a new paradigm of continuous, streaming traffic engineering where allocations are iteratively refined as telemetry arrives. We also considered exploring the possibility of running our decomposition in a distributed fashion, with subproblems running on routers. While this affects convergence time due to network communication time between subproblems, it could enable novel distributed traffic engineering schemes that provide both optimality and survivability. 

Our work raises no ethical concerns.

%% file: 7-appendix.tex
\newpage
\appendix

\section{Evaluation Continued}
\label{sec:appendix:eval-continued}
\noindent\textbf{Total flow served by \sysname is at-par with other TE solutions.}
While the evaluated TE solutions optimize for max-min fairness, we also review the total flow served by these algorithms. ~\cref{fig:served-wrt-optimal} shows that Soroush, SWAN and \sysname serve similar total flow as Danna, despite having lower optimality (max-min fairness). Waterfill serves \textasciitilde 8\% less traffic compared to Danna. 

\vspace{0.2em}

\begin{figure}[t]
\includegraphics[width=\linewidth]{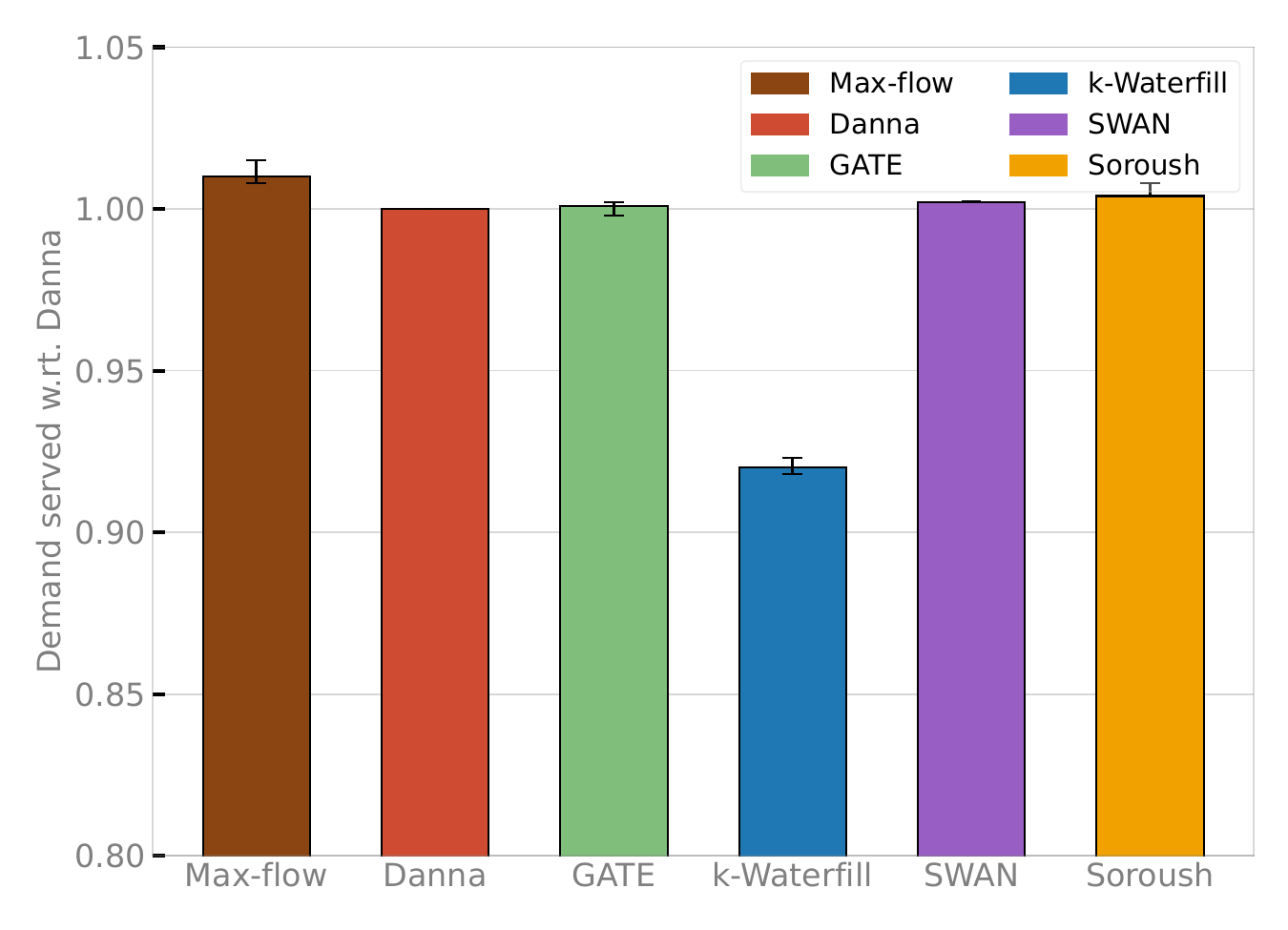}
\caption{\sysname runtime with increasing GPU threads}
\label{fig:served-wrt-optimal}
\end{figure}

\section{Decomposition Iterates - Continued}\label{app:iterate-remainder}
We now provide a detailed description of the abridged components of \cref{s:design:iterative}.

\subsection{Slack Variables and Inequality Constraints}\label{app:iterate-remainder:slack}
To cleanly apply the Alternating Direction Method of Multipliers (ADMM) to the TE formulation, the inequality constraints governing link capacities (\cref{eq:capacity-constraint-primal}) and requested demands (\cref{eq:demand-constraint-primal}) must be converted into equality constraints. Standard ADMM templates rely on \textit{exact block minimizations}, which are harder if constrained by inequalities during the subproblem updates. 

We resolve this by introducing non-negative slack variables $slack_{dem} \ge 0$ and $slack_{cap} \ge 0$. In each iteration, these variables are computed to absorb the current inequality gaps such that $\sum x + slack_{cap, e} = C_e$ and $\sum x + slack_{dem, st} = D_{st}$. In each iteration, we compute the slack variables alongside the dual variables ($dual_*$)  by setting $\frac{\delta \mathcal{L}}{\delta slack} = 0$,

\begin{equation}
    slack_{dem_{st}}^{k+1} = \left[ \frac{dual_{dem_{st}}^k}{\beta} + \left( D_{st} - \sum_{r \in P_{st}} x_r^k \right) \right]_+
\end{equation}

\begin{equation}
    slack_{cap_e}^{k+1} = \left[ \frac{dual_{cap_e}^k}{\beta} + \left( C_e - \sum_{r: e \in r} y_{e,r}^k \right) \right]_+
\end{equation}

By mathematically absorbing the overages or underages, the slack variables allow the primal rate updates for $x_r$ and $y_{e,r}$ to be solved as unconstrained minimizations. 

\subsection{Sum Computation ($S_{st}$)}\label{app:iterate-remainder:newton}
In \cref{s:design:iterative}, we noted that extracting the optimal path rates $x_r$ requires solving a vector maximization problem over the path rate vector $\mathbf{x} \in \mathbb{R}^{|P_{st}|}$. Unlike standard link constraints which operate element-wise, the $\alpha$-fair utility objective couples the rates of all paths supporting a given demand through their aggregate sum $S_{st} = \sum_r^{P_{st}} x$. 

For instance, under proportional fairness ($\alpha=1$), the structural form of the local subproblem is akin to $\log(\sum \mathbf{x}) - \sum \mathbf{x} + \sum (\mathbf{x}-\mathbf{y})^2$, differentiating which gives $\frac{1}{S_{st}} - S_{st} + 2x_r = 0$. While this equation cannot be solved isolated, we can take the system of equations for all $r \in P_{st}$ to get an equation of the form $|P_{st}| * \frac{1}{S_{st}} - |P_{st} * |S_{st} + 2S_{st} = 0$. In other words, we now have a quadratic equation in $S_{st}$. 




\paragraph{\textbf{Generic $\alpha$-Fairness (Newton-Bisection)}}
For a general $\alpha > 0$ objective, substituting the individual rate updates back into the definition of $S_{st}$ yields a polynomial equation of the form,
\begin{equation}
    A S^{\alpha-1} + B S + C = 0.
\end{equation}\label{eq:sum-computation} Because this function is strictly monotonic and convex over the physically valid domain of flow rates, it possesses a unique positive root. 

We compute this root using a parallelized Newton-Bisection method\cite{stoer2002numerical}. \sysname uses Newton-Raphson descent for rapid convergence but intelligently falls back to binary bisection if the gradient step overshoots the valid bounds. This guarantees exact convergence to $S_{st}^*$ within bounded number of GPU cycles.

\paragraph{\textbf{Proportional Fairness ($\alpha=1$) Quadric Root}}




Applying the standard quadratic formula yields the exact commodity sum instantly.

\section{Proof of Theorem \ref{thm:alpha_admm_conv}} \label{appendix:thm proof}
\textbf{Overview:} The proof is structured in two parts. First, we map \sysname's formulation to a 2-block ADMM-style decomposition. Applying the standard ADMM convergence theorem ensures that for any fixed $\alpha$, the inner iterations strictly satisfy capacity and demand constraints while converging to the optimal $\alpha$-fair utility. Second, as $\alpha$ increments, the allocations track the homotopy path toward max-min fairness. We prove using Karush-Kuhn-Tucker (KKT) conditions that if the allocation stagnates across an $\alpha$-increment (triggering the zero-iteration halt condition), the network is physically locked by structural bottlenecks, mathematically guaranteeing the allocation has reached its max-min fair limit.

\begin{proof}
\noindent\textbf{Part I: ADMM Convergence for Fixed $\alpha$}

We map the Augmented Lagrangian in \cref{eq:augmented-lagrangian} to the standard 2-block ADMM template $\min_{u,v} F(u)+G(v)$ subject to $\mathcal A u+\mathcal B v=c$ and apply the classical convergence theorem for 2-block ADMM~\cite{neal2011distributed}. It suffices to verify: (a) closed, proper, and convex objectives, (b) existence of a primal--dual saddle point (e.g., Slater's condition), and (c) exact block minimizations.

\noindent\textbf{(a) Closed, proper, convex objective:} The indicator function for the non-negative orthant is closed, proper, and convex. For $\alpha>0$, the utility function $\varphi_\alpha$ is closed, proper, and convex on $(0,\infty)$ (extended by $+\infty$ for $S\le 0$), hence the aggregate objective $f_\alpha(x)=\sum_{(s,t)\in\mathcal K}\varphi_\alpha(S_{st}(x))$ is closed, proper, and convex on its domain. Therefore, the decomposed objective blocks $F$ and $G$ are closed, proper, and convex.

\noindent\textbf{(b) Slater point (strict feasibility):} Consider the original TE problem in minimization form:

$$ \begin{aligned} \min_{x}\ \sum_{s,t\in \mathcal K}-U_\alpha(\sum_{r\in P_{st}}x_r)\quad\text{s.t.}\quad x\ge 0,\\ S_{st}(x)\le D_{st}\ \forall(s,t)\in\mathcal K,\ \sum_{r:\,e\in r}x_r\le C_e\ \forall e. \end{aligned} $$

By our construction, $\mathcal K$ contains only commodities with $P_{st}\neq\emptyset$ and $D_{st}>0$, and every link $e$ that belongs to at least one path in $\bigcup_{(s,t)\in\mathcal K}P_{st}$ has $C_e>0$. For each $(s,t)\in\mathcal K$, let $m_{st}:=|P_{st}|$ and pick $\varepsilon>0$. Set:

$$ x_r^\circ:=\frac{\varepsilon}{m_{st}},\qquad \forall r\in P_{st},\ \forall (s,t)\in\mathcal K. $$

Then $S_{st}(x^\circ)=\varepsilon\in(0,D_{st})$ provided $\varepsilon<\min_{(s,t)\in\mathcal K}D_{st}$, so $x^\circ$ lies in the $\alpha$-domain interior ($S_{st}>0$). For each link $e$, the induced load is $\ell_e(\varepsilon):=\sum_{r:\,e\in r}x_r^\circ=\varepsilon\cdot \kappa_e$, where $\kappa_e:=\sum_{(s,t)\in\mathcal K}\frac{|\{r\in P_{st}: e\in r\}|}{m_{st}}<\infty$.

Choosing $\varepsilon < \min_{e:\kappa_e>0} C_e/\kappa_e$ yields strict feasibility for all inequalities and domain interiority. Hence Slater's condition holds, implying strong duality and existence of a saddle point.

\noindent\textbf{(c) Exact block minimizations:} The rate updates ($x_*$) are exact because non-negativity is rigorously enforced via the dual equation \cref{eq:dual-updates-neg}, and each commodity subproblem cleanly reduces to solving a unique scalar root in \cref{eq:sum-computation}. The rate suggestion updates ($y_*$) are exact because each link subproblem explicitly resolves into the closed-form bounded projection given in \cref{eq:rate-suggestion}. Therefore, all hypotheses of the standard 2-block ADMM convergence theorem strictly apply.
\smallskip

\noindent\textbf{Part II: Homotopy Continuation to Max-Min Fairness under $\gamma$-Tolerance.}

We now prove that if the continuation loop stagnates across an $\alpha$-increment (meaning the optimal allocation requires zero ADMM iterations to converge for the new $\alpha$), the allocation is within a strictly bounded neighborhood of the max-min fair limit. Fairness in resource allocation here is formally measured by equal raw allocation capped by demand.

Let the feasible routing space $\mathcal P$ be a convex polytope defined by the linear network capacity and demand constraints. Because the algorithm terminates when the primal and dual residuals fall below the convergence threshold $\gamma$, the exact Karush-Kuhn-Tucker (KKT) stationarity condition is relaxed to a $\gamma$-approximate condition. For a given $\alpha$, the approximate allocation $\tilde{S}_\alpha$ satisfies:

\begin{equation}
\label{eq:approx_kkt_alpha}
\Big\|(\tilde{S}_\alpha)^{-\alpha} - \sum_{i \in \text{active}}\lambda_{i,\alpha}A_i\Big\| \le \mathcal{O}(\gamma)
\end{equation}

where $A_i$ represents the active constraint normals and $\lambda_{i,\alpha} \ge 0$ represents the corresponding dual multipliers.

Assume the algorithm halts due to stagnation, meaning the allocation $\tilde{S}_\alpha$ requires zero iterations to satisfy the $\gamma$-threshold for the incremented objective $\alpha+1$. Applying the relaxed KKT condition to the new objective yields:

\begin{equation}
\label{eq:approx_kkt_alpha_plus_1}
\Big\|(\tilde{S}_\alpha)^{-(\alpha+1)} - \sum_{i \in \text{active}}\lambda_{i,\alpha+1}A_i\Big\| \le \mathcal{O}(\gamma)
\end{equation}

For both $\gamma$-bounded conditions to hold simultaneously on the identical allocation vector $\tilde{S}_\alpha$ while the gradient undergoes a non-linear transformation, the active constraint normals $A_i$ must strictly dominate the gradient direction within the error bounds. Geometrically, rather than being rigidly pinned at a single vertex, the allocation $\tilde{S}_\alpha$ is bounded within an $\epsilon$-neighborhood of a fully saturated face or vertex of the convex polytope $\mathcal P$, where $\epsilon$ is directly proportional to $\gamma$.

In this $\gamma$-locked state, no feasible direction exists to increase the rate of any commodity without violating the $\gamma$-tolerance of the physical constraints. Every commodity is either fully satisfied and bounded by its absolute demand constraint ($S_{st} \approx D_{st}$), or it is constrained by a saturated bottleneck link where it holds an equal raw share of the residual capacity (up to the $\gamma$-margin). Thus, exact stagnation under a $\gamma$-threshold implies the algorithm has discovered the physical saturation point of the topology within a rigorously bounded error.
\end{proof}